\definecolor{MyDarkBlue}{rgb}{0,0.08,0.45}
\definecolor{cites}{HTML}{324b13}
\definecolor{links}{HTML}{1a663b}
\definecolor{MyLightMagenta}{cmyk}{0.1,0.8,0,0.1}
\definecolor{sblue}{HTML}{0049A9}
\definecolor{scyan}{HTML}{CBEAFC}
\definecolor{sred}{HTML}{B5595C}
\definecolor{sgreen}{HTML}{609B57}
\definecolor{spink}{HTML}{FFB0FF}
\newtheorem{theorem}{Theorem}
\newtheorem{lemma}{Lemma} 
\newtheorem{proposition}{Proposition}
\newtheorem{definition}{Definition} 
\newtheorem{corollary}{Corollary}
\newtheorem{remark}{Remark}
\newtheorem{observation}{Observation}
\newcommand{\aaction}{\ensuremath{a}}
\newcommand{\aactionb}{\ensuremath{\aaction^\prime}}
\newcommand{\aactionbb}{\ensuremath{\aaction^{\prime\prime}}}
\newcommand{\Actions}{\ensuremath{A}}
\newcommand{\Actionsb}{\ensuremath{B}}
\newcommand{\Actionsc}{\ensuremath{C}}
\newcommand{\payoff}{\ensuremath{u}}
\newcommand{\type}{\ensuremath{\theta}}
\newcommand{\typeb}{\ensuremath{\type^\prime}}
\newcommand{\Types}{\ensuremath{\Theta}}
\newcommand{\Posteriors}{\ensuremath{\Delta(\Types)}}
\newcommand{\belief}{\ensuremath{\mu}}
\newcommand{\prior}{\ensuremath{\belief_0}}
\newcommand{\marginalbase}{\ensuremath{\nu}}
\newcommand{\marginal}{\ensuremath{\marginalbase_0}}
\newcommand{\joint}{\ensuremath{\pi}}
\newcommand{\reals}{\ensuremath{\mathbb{R}}}
\newcommand{\bsplit}{\ensuremath{\tau}}
\newcommand{\base}{\ensuremath{G}}
\newcommand{\setplayers}{\ensuremath{[N]}}
\newcommand{\demandbase}{\ensuremath{d}}
\newcommand{\demand}{\ensuremath{\demandbase_{(\marginal,\bsplit)}}}
\newcommand{\graphp}{\ensuremath{G_P}}
\newcommand{\strat}{\ensuremath{\alpha}}
\newcommand{\simplex}{\ensuremath{\Delta}}
\newcommand{\opt}{\ensuremath{\simplex^*}}
\newcommand{\opta}{\ensuremath{\opt(\aaction)}}
\newcommand{\optab}{\ensuremath{\opt(\aactionb)}}
\newcommand{\cardstates}{\ensuremath{|\Types|}}
\newcommand{\normal}{\ensuremath{N}}
\newcommand{\direction}{\ensuremath{c}}
\newcommand{\Directions}{\ensuremath{C}}
\newcommand{\binding}{\ensuremath{B}}
\newcommand{\vectors}{\ensuremath{x}}
\newcommand{\vectorb}{\ensuremath{\vectors^\prime}}
\newcommand{\deviation}{\ensuremath{d}}
\newcommand{\deviationbb}{\ensuremath{\deviation_{\aactionb,\aactionbb}}}
\newcommand{\marginalp}{\ensuremath{\overline{\marginalbase}_0}}
\newcommand{\rnpayoff}{\ensuremath{\tilde{\payoff}}}
\newcommand{\Joints}{\ensuremath{\Pi}}
\newcommand{\bce}{\ensuremath{\mathrm{BCE}}}
\newcommand{\bceprior}{\ensuremath{\bce\left(\prior\right)}}
\newcommand{\scr}{\ensuremath{\sigma}}
\newcommand{\nplayers}{\ensuremath{N}}
\newcommand{\playerindex}{\ensuremath{i}}
\newcommand{\pair}{\ensuremath{(\prior,\marginal)}}
\newcommand{\Actionsi}{\ensuremath{\Actions_\playerindex}}
\newcommand{\payoffi}{\ensuremath{\payoff_\playerindex}}
\newcommand{\actioni}{\ensuremath{\aaction_\playerindex}}
\newcommand{\actionbi}{\ensuremath{\aactionb_\playerindex}}
\newcommand{\actionmi}{\ensuremath{\aaction_{-\playerindex}}}
\title{{\Huge{The Core of Bayesian Persuasion}\thanks{We are grateful to Yaron Azrieli, Denniz Kattwinkel, Emir Kamenica, Shengwu Li, Elliot Lipnowski, Meg Meyer, Stephen Morris, Jacopo Perego, Andrea Prat, John Rehbeck, and Vasiliki Skreta for thought-provoking questions and insightful discussions. We owe special thanks to Quitz\'e Valenzuela-Stookey, who suggested ideas that facilitated the proof of \autoref{theorem:bce-c}.}}
}
\author{Laura Doval\thanks{Columbia Business School and CEPR. Email: \href{mailto:laura.doval@columbia.edu}{\texttt{laura.doval@columbia.edu}}} \and Ran Eilat\thanks{Department of Economics, Ben-Gurion University. E-mail: \href{mailto:eilatr@bgu.ac.il}{\texttt{eilatr@bgu.ac.il}}}}
\begin{document}
\maketitle
\begin{abstract}
An analyst observes the frequency with which an agent takes actions, but not the frequency with which she takes actions conditional on a payoff relevant state. In this setting, we ask when the analyst can rationalize the agent's choices as the outcome of the agent learning something about the state before taking action. Our characterization marries the obedience approach in information design \citep{bergemann2016bayes} and the belief approach in Bayesian persuasion \citep{kamenica2011bayesian} relying on a theorem by \cite{strassen1965existence} and Hall's marriage theorem. We apply our results to  ring-network games and to identify conditions under which a data set is consistent with a public information structure in first-order Bayesian persuasion games. 

\bigskip

\textsc{Keywords:}\emph{ Bayes correlated equilibrium, Bayesian persuasion, information design, stochastic choice, distributions with given marginals, cooperative games, set functions, core}

\end{abstract}

\section{Introduction}
Given a primitive payoff structure, information design  provides a framework for rationalizing outcomes as the result of non-cooperative play \emph{without} having to specify the players' information structure. For this reason, the seminal work of \cite{bergemann2016bayes} has spurred renewed interest among empirical scholars wishing to obtain identification and estimation results under a weaker set of information assumptions (see, for instance, \citealp{syrgkanis2017inference,magnolfi2019estimation,koh2022stable}). 

However, the weaker set of assumptions on the information structure comes at the cost of increasing demands on the data set available to the analyst. Indeed, the usual assumption in the literature is that the analyst is given a \emph{joint} distribution over payoff relevant states and action profiles. For instance, the literatures on rational inattention and stochastic choice usually assume that the analyst observes an agent's choices conditional on the state of the world (e.g., \citealp{caplin2015revealed,aguiar2018does}).  Given this data set, Bayes correlated equilibrium provides an easy to test set of conditions that the joint distribution over states and action profiles must satisfy in order to be consistent with the outcome of non-cooperative play under some information structure.

Oftentimes, however, the analyst's data set is more limited. The analyst may observe the distribution over the payoff relevant states of the world and the distribution over action profiles, but not the distribution over action profiles conditional on the state of the world.\footnote{Whereas state-dependent stochastic choice data is useful to guide the design and interpretation of experiments, this data is oftentimes hard to come by outside the experimental setting. \cite{dardanoni2020inferring} provides an eloquent discussion of the \emph{data voracity} of stochastic choice.} We can then ask, given the primitive payoff structure, which marginal distributions can be rationalized as the outcome of non-cooperative play under some information structure. We refer to such marginals as \emph{BCE consistent} because they satisfy that a joint distribution over states and action profiles exists that is consistent with the marginals \emph{and} is a Bayes correlated equilibrium. Characterizing the set of BCE-consistent marginal distributions can only increase the practical applicability of Bayes correlated equilibrium.

The set of BCE-consistent marginal distributions is of interest for two other reasons. First, the analyst oftentimes is not just interested in the existence of an information structure that rationalizes the (marginal) distribution of play, but one that satisfies certain properties. For instance, the analyst may want to test whether the agents have private information. As we explain below, our characterization result provides us with a test for the existence of a public information structure that rationalizes the observed distribution of play. The second reason is related to reduced-form implementation in mechanism design \citep{matthews1984implementability,border1991implementation}. Whenever the information designer only cares about the agents' action profiles, but not the state of the world, the information designer's problem can be expressed as the choice out of the set of BCE-consistent marginals.

In this paper, we take the first step towards characterizing the set of BCE-consistent marginals by considering the single-agent case. \autoref{theorem:bce-c} provides a characterization of the set of BCE-consistent marginals building on a theorem in \cite{strassen1965existence}. Furthermore, marrying the obedience approach in information design with the belief approach in \cite{kamenica2011bayesian}, \autoref{theorem: single agent BCE} characterizes the Bayes plausible distributions over posteriors that implement a given marginal over actions. We provide two network-based proofs of \autoref{theorem: single agent BCE}.  Relying on recent extensions of Hall's marriage theorem in \cite{barseghyan2021heterogeneous} and \cite{azrieli2022marginal}, the first characterization uncovers a connection between BCE-consistency and  the \emph{core} of the game induced by loosely speaking, some (Bayes plausible) posterior distribution (see \autoref{remark:core} and \citealp{grabisch2016set}). The second proof relies on the demand problem of \cite{gale1957theorem}. We show that one can interpret BCE-consistency problem as a supply-demand problem in a persuasion economy, in which the marginal action distribution describes the demand and a Bayes plausible posterior distribution describes the supply.  We then rely on the results in \cite{gale1957theorem} to determine when the demand is feasible given the supply.

\autoref{sec:applications} illustrates how \autoref{theorem:bce-c} already allows us to study multi-agent games. \autoref{sec:public} applies \autoref{theorem:bce-c} to the first-order Bayesian persuasion setting of \cite{arieli2021feasible} to characterize the subset of BCE-consistent marginals that are consistent with a \emph{public} information structure. Instead, \autoref{sec:ring} applies \autoref{theorem:bce-c} to characterize BCE-consistent marginals in ring-network games as in \cite{kneeland2015identifying}.


\paragraph{Related literature} The two closest papers to ours are \cite{rehbeck2023revealed} and \cite{azrieli2022marginal}. \cite{rehbeck2023revealed} studies the same question as us, but when the analyst has access to a decision maker's unconditional stochastic choices, possibly out of different menus. For the case of a single menu, the characterization in \cite{rehbeck2023revealed} is different from that in \autoref{theorem:bce-c} and is stated in terms of the non-existence of a possibly mixed deviation. \cite{azrieli2022marginal} study a similar question to ours in the context of stochastic choice. In their setting, the analyst has access to a marginal distribution over a decision maker's choices and a marginal distribution over the menus out of which the decision maker made her choices. \cite{azrieli2022marginal} show that the marginal distributions are consistent if and only if the marginal over choices is in the core of the game induced by the marginal over menus. 

A literature in decision theory and experimental economics studies when choices can be rationalized via costly information acquisition and whether the choices can be used to identify the information acquisition costs (see, e.g., \cite{caplin2015revealed}, \cite{caplin2017rationally}, \cite{chambers2020costly}, \cite{dewan2020estimating}, \cite{denti2022posterior}). Like we do, many of these papers assume that the decision maker's utility is known.  More recently, assuming that the analyst has access to state-dependent stochastic choice data, \cite{caplin2023rationalizable} study when choices can be rationalized as if the agent has access to some information before choosing her actions. Whereas their analyst has access to a richer data set, they require consistency of the information structure across a family of decision problems.

\cite{arieli2021feasible} and \cite{morris2020no} characterize joint distributions over posterior beliefs that are consistent with some information structure.\footnote{Whereas \cite{arieli2021feasible} study the binary-state case, the characterization in \cite{morris2020no} requires no such assumption.} Both papers cast the problem as one of distributions with given marginals: they take as given a profile of marginal distributions over posterior beliefs with the same mean and characterize when a joint distribution with the given marginals exists that is consistent with information. 

Finally,  \cite{toikka2022bayesian} study reduced-form implementation in a Bayesian persuasion in which the sender and the receiver care only about the posterior mean of the states. They leverage the mean preserving spread property to write a linear programming problem for the sender that only depends on the marginal distribution over actions. Beyond the posterior mean setting, they do not provide a characterization of the set of implementable marginal action distributions.

\section{Model}\label{sec:model}
Anticipating our multi-agent results in \autoref{sec:applications}, our notation below presumes multiple agents. We then specialize it to the single-agent case in \autoref{sec:single-agent}:

\paragraph{Base game:} An incomplete information \emph{base game}, \base, is defined as follows. We are given a set of \nplayers\ players, \setplayers=$\{1,\dots,\nplayers\}$. Each player $\playerindex\in\setplayers$ chooses an action from the finite set $\Actionsi$. Payoffs $\payoffi(\aaction,\type)$ depend on the action profiles $\aaction\in\Actions\equiv\times_{\playerindex\in\setplayers}\Actionsi$ and the state of the world, \type, an element of the finite set \Types.\footnote{As we explain in \autoref{sec:single-agent} our single-agent characterization extends to the case in which \Types\ and \Actions\ are infinite (see \autoref{remark:strassen}). However, the set of finitely many states and actions allows us to provide a \emph{sharper} characterization.}  The players share a common prior $\prior\in\Delta(\Types)$ over the state of the world. That is, $\base=\langle\Types,(\Actionsi,\payoffi)_{\playerindex\in\setplayers},\prior\rangle$.

\paragraph{Bayes correlated equilibrium:} An \emph{outcome} is a joint distribution over action profiles and states of the world, $\joint\in\Delta(\Actions\times\Types)$. We are concerned with those outcomes that are consistent with non-cooperative play of the base game, where the solution concept is Bayes Nash equilibrium. The notion of Bayes correlated equilibrium in \cite{bergemann2016bayes} captures the set of outcomes that are consistent with (Bayes Nash) equilibrium of the base game under \emph{some} information structure:

\begin{definition}[Bayes correlated equilibrium]\label{definition:bce} An outcome distribution $\joint\in\Delta(\Actions\times\Types)$ is a \emph{Bayes correlated equilibrium} of base game $\base=\langle\Types,(\Actionsi,\payoffi)_{\playerindex\in\setplayers},\prior\rangle$, if for all agents $\playerindex\in\setplayers$, actions $\actioni,\actionbi\in\Actionsi$, the following holds
\begin{align}\label{eq:obedience}\tag{O}
\sum_{(\actionmi,\type)}\joint(\actioni,\actionmi,\type)\left[\payoffi(\actioni,\actionmi,\type)-\payoffi(\actionbi,\actionmi,\type)\right]\geq0,
\end{align}
and for all $\type\in\Types$
\begin{align}\label{eq:martingale}\tag{M$_\Types$}
    \sum_{\aaction\in\Actions}\joint(\aaction,\type)=\prior(\type).
\end{align}
Let \bceprior\ denote the set of Bayes correlated equilibria.
\end{definition}
In words, a Bayes correlated equilibrium is an outcome distribution that satisfies a series of \emph{obedience} constraints \eqref{eq:obedience} and a \emph{martingale} condition \eqref{eq:martingale}. The first ensures each player's best response condition under \emph{some} information structure, whereas the second ensures the existence of an information structure that is consistent with the players' prior information. Note that any Bayes correlated equilibrium $\joint\in\Delta(\Actions\times\Types)$ induces two marginal distributions, $(\joint_\Types,\joint_\Actions)\in\Posteriors\times\Delta(\Actions)$. The definition of Bayes correlated equilibrium implies that the primitive base game \base\ pins down $\joint_\Types$, but not necessarily $\joint_\Actions$.

\paragraph{Information Design with Given Marginals:} We take the point of view of an analyst who knows the base game, but not the information structure under which the base game is played. The analyst is also endowed with information about the actions taken by the players. The analyst's goal is to determine whether this information is consistent with non-cooperative play of the base game under some information structure.

We consider two kinds of information the analyst may have about the players' actions, which are equivalent in the single-agent setting. In the first case, the analyst is endowed with a distribution over action profiles, $\marginal\in\Delta(\Actions)$. In the second case, the analyst is endowed with a profile of action distributions, one for each player, that is, $\marginalp=(\marginalbase_{0,1},\dots,\marginalbase_{0,N})\in\times_{\playerindex\in\setplayers}\Delta(\Actionsi)$. 

In each of these cases, the analyst wants to ascertain whether a Bayes correlated equilibrium $\joint\in\bceprior$ exists such that $\joint_\Actions$ coincides with the analyst's information about the players' actions (i.e., $\joint_\Actions=\marginal$ or $\times_{\playerindex\in\setplayers}\joint_{\Actionsi}=\marginalp$). In this case, we say that the marginals $\pair$ are BCE-consistent or that the profile of marginal distributions $(\prior,\marginalp)$ are M-BCE-consistent. \autoref{definition:bce-consistent} records this for future reference:
\begin{definition}[BCE- and M-BCE-consistent marginals]\label{definition:bce-consistent} Say that $\pair$ are \emph{BCE-consistent} if a Bayes correlated equilibrium  $\joint\in\bceprior$ exists such that $\joint_\Actions=\marginal$. Similarly, we say that $(\prior,\marginalp)$ are \emph{M-BCE-consistent} if a Bayes correlated equilibrium $\joint\in\bceprior$ exists such that for all players $\playerindex\in\setplayers$, $\joint_{\Actionsi}=\marginalbase_{0,i}$.
\end{definition}
Note that if $\pair$ are BCE-consistent, then letting $\marginalbase_{0,\playerindex}$ denote the marginal of \marginal\ over $\Actionsi$, we have that $(\prior,\marginalbase_{0,1},\dots,\marginalbase_{0,\nplayers})$ are M-BCE-consistent.

\paragraph{Constrained Optimal Transport} We close this section by noting a connection with optimal transport. Given $\pair$, let $\Joints\pair$ denote the set of joint distributions $\joint\in\Delta(\Actions\times\Types)$ withmarginals $\pair$, i.e.,  $(\joint_\Types,\joint_\Actions)=\pair$. Note that $\Joints\pair$ is always nonempty, e.g., the joint distribution $\joint(\aaction,\type)=\marginal(\aaction)\prior(\type)$ satisfies the marginal constraints. Instead, the subset $\Joints_\mathrm{O}\pair$ of $\Joints\pair$ that satisfies the obedience constraints \eqref{eq:obedience} may be empty. Thus, the characterization of the set of BCE-consistent marginals $\pair$ is equivalent to the characterization of when the feasible set of a \emph{constrained} optimal transport problem--in this case $\Joints_\mathrm{O}\pair$--is nonempty.\footnote{In their study of credible Bayesian persuasion, \cite{lin2022credible} characterize the set of credible outcome distributions by noting a connection with optimal transport. In their case, to check whether a given message distribution $\lambda_M$ is implementable, it must be that no other joint distribution over states and messages that respects the given marginals exists and is preferred by the sender to $\lambda_M$.}

\section{Single-agent case}\label{sec:single-agent}
In this section we characterize the set of BCE-consistent marginals in the case of a single agent, that is, $\nplayers=1$. For this reason, in what follows we remove the index $i=1$ from the action set and the utility function. 

\paragraph{Distributions over posteriors and stochastic choice} An outcome distribution $\joint\in\Delta(\Actions\times\Types)$ with marginals $\pair$ induces two conditional probability systems: The first, $\{\belief(\cdot|\aaction)\in\Posteriors:\aaction\in\Actions\}$, describes the agent's beliefs conditional on action \aaction\ and satisfies for all actions $\aaction\in\Actions$,
\[\marginal(\aaction)\belief(\type|\aaction)=\joint(\aaction,\type).\]
In this case, one can view \marginal\ as a distribution over posteriors and the belief system $\left(\belief(\cdot|\aaction)\right)_{\aaction\in\Actions}$ as its support. 

The second, $\{\scr(\cdot|\type)\in\Delta(\Actions):\type\in\Types\}$, describes the agent's actions conditional on state \type\ and satisfies for all states $\type\in\Types$,
\[\prior(\type)\scr(\aaction|\type)=\joint(\aaction,\type).\]
The collection $\{\scr(\cdot|\type):\type\in\Types\}$ is what the stochastic choice literature dubs the agent's stochastic choice rule.

The analysis that follows characterizes the set of BCE-consistent marginals relying on the belief system, $\{\belief(\cdot|\aaction)\in\Posteriors:\aaction\in\Actions\}$. Instead, the stochastic choice rule $\scr(\cdot|\type)$ is the focus of the analysis in \autoref{sec:core}.

\paragraph{The action marginal as a distribution over posteriors} Given marginals $\pair$, the goal is to determine whether a belief system $\{\belief(\cdot|\aaction):\aaction\in\Actions\}$ exists that satisfies for all states $\type\in\Types$
\begin{align}\label{eq:marginal-states-b}\tag{BP$_{\prior}$}
    \sum_{\aaction\in\Actions}\marginal(\aaction)\belief(\type|\aaction)=\prior(\type),
\end{align}
and for all $\aaction,\aactionb\in\Actions$,
\begin{align}\label{eq:obedience-b}\tag{O$_\belief$}
\sum_{\type\in\Types}\marginal(\aaction)\belief(\type|\aaction)\left[\payoff(\aaction,\type)-\payoff(\aactionb,\type)\right]\geq0.
\end{align}
For an action \aaction, let $\opta$ denote the set of beliefs under which \aaction\ is optimal.\footnote{Formally, $\opta=\{\belief\in\Posteriors:(\forall\aactionb\in\Actions)\sum_{\type\in\Types}\belief(\type)\left(\payoff(\aaction,\type)-\payoff(\aactionb,\type)\right)\geq0\}$.} 
Then, Equations \ref{eq:marginal-states-b} and \ref{eq:obedience-b} require that (i) \marginal\ induces a Bayes plausible distribution over posteriors and (ii) for all actions \aaction, the \emph{posterior belief} $\belief(\cdot|\aaction)$ is an element of $\opta$. Under this interpretation,  the action distribution \marginal\ describes the frequency with which inducing beliefs in $\opta$ is necessary. Unsurprisingly, some of the conditions in \autoref{theorem:bce-c} below also check that \marginal\ satisfies a version of the martingale condition \citep{aumann1995repeated,kamenica2011bayesian}.

\autoref{theorem:bce-c} characterizes the set of BCE-consistent marginals:

\begin{theorem}[BCE-consistency]\label{theorem:bce-c}
The pair $\pair$ is BCE-consistent if and only if for all states $\type\in\Types$,
\begin{align}
\sum_{\aaction\in\Actions}\marginal(\aaction)\min_{\belief\in\opta}\belief(\type)&\leq\prior(\type),\label{eq:bce-c-states}
\intertext{and for all pairs of actions $\aactionb,\aactionbb\in\Actions$,}
\label{eq:bce-c-actions}
\sum_{\aaction\in\Actions}\marginal(\aaction)\max_{\belief\in\opta}\sum_{\type\in\Types}\belief(\type)\left[u(\aactionb,\type)-u(\aactionbb,\type)\right]&\geq\sum_{\type\in\Types}\prior(\type)\left[u(\aactionb,\type)-u(\aactionbb,\type)\right].
\end{align}
\end{theorem}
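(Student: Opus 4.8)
The plan is to reduce BCE-consistency to a single containment in a Minkowski sum and then to identify the two displayed families as its facet constraints. Unpacking \autoref{definition:bce-consistent} in the single-agent case through \eqref{eq:marginal-states-b}--\eqref{eq:obedience-b}, the pair $\pair$ is BCE-consistent exactly when there is a belief system $(\belief_{\aaction})_{\aaction\in\Actions}$ with $\belief_{\aaction}\in\opta$ for every $\aaction$ in the support of \marginal\ and $\sum_{\aaction\in\Actions}\marginal(\aaction)\belief_{\aaction}=\prior$. Writing $\mathcal{M}:=\sum_{\aaction\in\Actions}\marginal(\aaction)\opta$ for the \marginal-weighted Minkowski sum of the optimality regions, this is precisely the statement $\prior\in\mathcal{M}$. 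Each $\opta$ is a polytope, cut out of $\Posteriors$ by the obedience inequalities, so $\mathcal{M}$ is a compact convex polytope lying, like $\prior$, in the affine hyperplane $\{x:\sum_{\type\in\Types}x(\type)=1\}$.

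Necessity then follows with no real work from additivity of support functions over Minkowski sums: for every $w\in\reals^{\Types}$ one has $\max_{x\in\mathcal{M}}w\cdot x=\sum_{\aaction\in\Actions}\marginal(\aaction)\max_{\belief\in\opta}w\cdot\belief$, so $\prior\in\mathcal{M}$ forces $w\cdot\prior\le\sum_{\aaction\in\Actions}\marginal(\aaction)\max_{\belief\in\opta}w\cdot\belief$ for all $w$. Substituting $w=-e_{\type}$ turns the inner maximum into $-\min_{\belief\in\opta}\belief(\type)$ and reproduces \eqref{eq:bce-c-states}; substituting the payoff-difference vector $w=(\payoff(\aactionb,\type)-\payoff(\aactionbb,\type))_{\type\in\Types}$ reproduces \eqref{eq:bce-c-actions}. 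Thus both inequalities are simply the separating inequality read off at two specific families of directions.

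Sufficiency is the real content: I must show that these two finite families already certify membership, i.e.\ that whenever $\prior\notin\mathcal{M}$ some separating direction can be taken in $\{-e_{\type}:\type\in\Types\}\cup\{(\payoff(\aactionb,\cdot)-\payoff(\aactionbb,\cdot)):\aactionb,\aactionbb\in\Actions\}$. The encouraging starting point is that the facet normals of each individual cell $\opta$ are exactly of these two types: the constraints $\belief(\type)\ge0$ supply the normals $-e_{\type}$, while the obedience constraints supply the payoff differences. I would then recast the problem as one of distributions with given marginals---existence of $\joint\ge0$ on $\Actions\times\Types$ with action-marginal \marginal, state-marginal \prior, and the obedience inequalities---and apply \cite{strassen1965existence} to extract a finite list of cut inequalities that are necessary and sufficient for feasibility.

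The main obstacle---and the reason a one-line separation argument does not close the proof---is that the \emph{averaged} body $\mathcal{M}$ can acquire facet normals that are strictly positive combinations of the elementary directions above; this genuinely occurs once $\cardstates\ge4$, where the common refinement of the cells' normal fans produces new rays. Checking only $\{-e_{\type}\}$ and the payoff differences must therefore be justified rather than assumed, and this is exactly where Strassen's theorem does the work: it forces the extreme infeasibility certificates of the transportation problem to be supported either on a single state (yielding \eqref{eq:bce-c-states}) or on a single ordered pair of actions (yielding \eqref{eq:bce-c-actions}), so that no genuinely new binding direction survives. Concretely, I expect the crux to be taking an arbitrary infeasibility certificate---a direction $w$ together with nonnegative obedience multipliers---and showing, via the cell description $\opta=\{\belief\in\Posteriors:\sum_{\type\in\Types}\belief(\type)[\payoff(\aaction,\type)-\payoff(\aactionb,\type)]\ge0\ \text{for all}\ \aactionb\in\Actions\}$ together with Strassen's representation, that it can be collapsed to one of the two elementary types without raising its value.
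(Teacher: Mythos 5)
Your reformulation and the necessity half are correct and, modulo packaging, coincide with the paper's: writing BCE-consistency as $\prior\in\mathcal{M}:=\sum_{\aaction}\marginal(\aaction)\opta$ and using additivity of support functions over Minkowski sums is exactly \autoref{eq:strassen}, and evaluating it at $\direction=-e_\type$ and at the payoff-difference directions reproduces \eqref{eq:bce-c-states} and \eqref{eq:bce-c-actions}. The problem is sufficiency, and you have located the difficulty precisely yourself---the normal fan of $\mathcal{M}$ is the common refinement of the fans of the $\opta$ and may acquire rays that are not elementary directions---but you do not overcome it. In particular, the claim that ``this is exactly where Strassen's theorem does the work: it forces the extreme infeasibility certificates \ldots to be supported either on a single state or on a single ordered pair of actions'' attributes to \citet[Theorem~3]{strassen1965existence} something it does not deliver. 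That theorem (\autoref{theorem:strassen} in the paper) only converts existence of the belief system into the requirement that \autoref{eq:strassen} hold for \emph{all} $\direction\in\reals^{\cardstates}$; it is silent on which directions are extreme and gives no mechanism for collapsing a general separating direction onto the two elementary families. The reduction from all directions to finitely many is the substantive content of \autoref{theorem:bce-c}, and your proposal defers it (``I expect the crux to be\ldots'') rather than proving it.

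For comparison, the paper's \autoref{appendix:bce-c} supplies exactly this step: it replaces each $\opta$ by its extreme points, partitions directions according to which profile $(\belief_{e,\aaction})_{\aaction\in\Actions}$ of extreme points attains the maxima, notes that on each such cell the map $\direction\mapsto\sum_{\aaction}\marginal(\aaction)\max_{\belief\in\opta}\direction^T(\belief-\prior)$ is linear, and invokes \autoref{lemma:normal-cone} to write the relevant cones $\cap_{\aaction}\normal(\belief_{e,\aaction}|\opta)$ in terms of generators drawn from $\Directions(\aaction)=\{(-e_\type)_{\type\in\Types},(-\deviationbb)_{\aactionbb\in\Actions}\}$, so that nonnegativity at the elementary directions---which is what \eqref{eq:bce-c-states} and \eqref{eq:bce-c-actions} assert---propagates to all of $\reals^{\cardstates}$. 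To complete your route you would need the analogous statement: that every facet-defining inequality of $\mathcal{M}$ is implied by the elementary ones, i.e.\ that each direction can be decomposed into elementary generators on which the objective is both nonnegative \emph{and} still linear. As written, your proof asserts that such a collapse exists but does not construct it, so the sufficiency direction remains open.
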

The proof is in \autoref{appendix:omitted}. In what follows, we provide intuition for the statement in \autoref{theorem:bce-c} and review the main steps of its proof.

\autoref{eq:bce-c-states} can be interpreted through the lens of the martingale property of beliefs. As discussed before \autoref{theorem:bce-c}, the action distribution \marginal\ describes the frequency with which beliefs in $\opta$ must be induced to satisfy \eqref{eq:marginal-states-b}. For a given state $\type\in\Types$, the term 
\[\underline{\belief}_\aaction(\type)\equiv\min_{\belief\in\opta}\belief(\type),\]
describes the smallest probability that the agent can assign to state \type\ and action \aaction\ be optimal. Thus,  \autoref{eq:bce-c-states} states that for $\pair$ to be BCE-consistent, it must be that the average under \marginal\ of these minimum probabilities, $\underline{\belief}_\aaction(\type)$, are below the prior probability of $\type$, \prior(\type). It is immediate that if for some state \type, \autoref{eq:bce-c-states} does not hold, then $\pair$ cannot be BCE-consistent. 

As we argue next, \autoref{eq:bce-c-actions} can be interpreted through the lens of a martingale property for the utility differences, $\payoff(\aactionb,\type)-\payoff(\aactionbb,\type)$. That is, for all pairs of actions, $\aactionb,\aactionbb$, the agent's expected ranking over \aactionb\ and \aactionbb\ under the experiment that rationalizes $\pair$ has to coincide with the agent's ex ante ranking over these actions, which is the right-hand side of \autoref{eq:bce-c-actions}. Indeed, because \autoref{eq:bce-c-actions} must hold when we exchange the roles of \aactionb\ and \aactionbb, we obtain that $\pair$ must also satisfy that
\begin{align}\label{eq:bce-c-actions-2}\sum_{\aaction\in\Actions}\marginal(\aaction)\min_{\belief\in\opta}\sum_{\type\in\Types}\belief(\type)\left[u(\aactionb,\type)-u(\aactionbb,\type)\right]\leq\sum_{\type\in\Types}\prior(\type)\left[u(\aactionb,\type)-u(\aactionbb,\type)\right].\end{align}
That is, the ranking at the prior between \aactionb\ and \aactionbb\ must be in between the worst and best rankings under the ``distribution over posteriors'' \marginal.

This is most easily seen in the simple case that \aactionbb\ is strictly optimal at the prior and $\{\aactionb,\aactionbb\}$ are the only actions in the support of \marginal.  Because \aactionb\ is in the support of \marginal, under a BCE \joint\ that satisfies the marginal constraints the agent must sometimes find it optimal to take action \aactionb\ instead of \aactionbb. Note, however, that \emph{on average} it must be the case that the agent finds action \aactionbb\ better than \aactionb. Consequently, under \joint, when the agent takes \aactionbb, the agent must prefer \aactionbb\ over \aactionb\ (weakly) more than at the prior. Because the left-hand side of \autoref{eq:bce-c-actions} selects beliefs in favor of \aactionb, it is immediate that if \autoref{eq:bce-c-actions} fails one cannot find an experiment in which the agent would take action \aactionb\ with sufficiently high probability so as to match \marginal.

So far, we have argued that the conditions in \autoref{theorem:bce-c} are necessary for $\pair$ to be BCE-consistent. To explain why they are also sufficient, it is useful to review the main steps in the proof of \autoref{theorem:bce-c}. Key to our proof is the following result from \cite{strassen1965existence}, which we record in present notation:

\begin{observation}[\protect{\citet[Theorem 3 and Corollary 1]{strassen1965existence}}]\label{theorem:strassen} 
A conditional probability system $\{\belief(\cdot|\aaction)\in\Posteriors:\aaction\in\Actions\}$ exists such that
\begin{enumerate}
    \item For all actions $\aaction\in\Actions$, $\belief(\cdot|\aaction)\in\opta$, and 
    \item For all states $\type\in\Types$, \ref{eq:marginal-states-b} holds,
\end{enumerate}
if and only if for all directions $\direction\in\reals^{\cardstates}$,
\begin{align}\label{eq:strassen}
    \sum_{\aaction\in\Actions}\marginal(\aaction)\max\{\direction^T\belief:\belief\in\opta\}\geq \direction^T\prior.
\end{align}
\end{observation}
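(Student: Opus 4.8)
The plan is to recast the existence question as a membership problem in a compact convex set and then characterize that membership through support functions. First I would observe that a belief system $\{\belief(\cdot|\aaction):\aaction\in\Actions\}$ satisfying items 1 and 2 exists if and only if
\[
\prior\in K\equiv\Bigl\{\sum_{\aaction\in\Actions}\marginal(\aaction)\,\belief_\aaction:\belief_\aaction\in\opta\ \text{for every}\ \aaction\in\Actions\Bigr\}\subseteq\reals^{\cardstates},
\]
where $K$ is the \marginal-weighted Minkowski sum $\sum_{\aaction}\marginal(\aaction)\,\opta$. Indeed, item 1 is the requirement $\belief(\cdot|\aaction)\in\opta$ for each \aaction, and item 2, namely \eqref{eq:marginal-states-b}, is the requirement $\prior=\sum_{\aaction}\marginal(\aaction)\,\belief(\cdot|\aaction)$ read coordinatewise; jointly these are exactly $\prior\in K$. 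Each \opta\ is a compact convex polytope (the intersection of \Posteriors\ with the finitely many obedience half-spaces), and a nonnegatively-weighted Minkowski sum of finitely many compact convex sets is again compact and convex, so $K$ inherits both properties.

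The necessity direction is then a one-line computation: if $\prior=\sum_{\aaction}\marginal(\aaction)\,\belief(\cdot|\aaction)$ with each $\belief(\cdot|\aaction)\in\opta$, then for every direction $\direction\in\reals^{\cardstates}$,
\[
\direction^T\prior=\sum_{\aaction}\marginal(\aaction)\,\direction^T\belief(\cdot|\aaction)\le\sum_{\aaction}\marginal(\aaction)\max\{\direction^T\belief:\belief\in\opta\},
\]
which is precisely \eqref{eq:strassen}. For sufficiency I would argue by contraposition using separation. If $\prior\notin K$, then because $K$ is nonempty, closed, and convex, the separating-hyperplane theorem produces a direction \direction\ with $\direction^T\prior>\sup_{x\in K}\direction^T x=h_K(\direction)$, where $h_K$ denotes the support function of $K$. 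The key structural fact is that the support function of a Minkowski sum is the sum of support functions (and is positively homogeneous in the scaling weights), so
\[
h_K(\direction)=\sum_{\aaction}\marginal(\aaction)\,h_{\opta}(\direction)=\sum_{\aaction}\marginal(\aaction)\max\{\direction^T\belief:\belief\in\opta\}.
\]
Thus the strict inequality $\direction^T\prior>h_K(\direction)$ is exactly the negation of \eqref{eq:strassen} for this \direction, so failure of membership forces a violation of the stated condition, completing the argument.

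The step I expect to be most delicate is establishing that $K$ is \emph{closed} (equivalently, that each maximum in \eqref{eq:strassen} is attained), since strict separation of \prior\ from $K$ requires it; this is precisely where finiteness of \Types\ and \Actions\ and compactness of each \opta\ do the work, and it is the place where Strassen's original infinite-dimensional statement needs tightness and weak-$*$ compactness arguments that the finite case here renders unnecessary. A secondary bookkeeping subtlety concerns actions with $\opta=\emptyset$: item 1 is then unsatisfiable, so no belief system exists, and correspondingly the maximand in \eqref{eq:strassen} is vacuous; I would dispose of this by restricting attention to undominated actions (equivalently, adopting the convention that makes both sides of \eqref{eq:strassen} well defined), which is without loss since a dominated action can never lie in the support of a BCE.
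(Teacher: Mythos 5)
Your argument is correct, but note that the paper does not actually prove \autoref{theorem:strassen}: it is stated as an imported result, cited from \citet[Theorem 3 and Corollary 1]{strassen1965existence}, and \autoref{remark:strassen} stresses that the general version lives on compact metric (indeed Polish) spaces, where the proof needs tightness and weak$^*$-compactness arguments. What you have supplied is a self-contained, purely finite-dimensional substitute: identify the set of achievable barycenters with the $\marginal$-weighted Minkowski sum $K=\sum_{\aaction}\marginal(\aaction)\opta$, observe that $K$ is compact and convex, and characterize membership of \prior\ via the support-function identity $h_K(\direction)=\sum_{\aaction}\marginal(\aaction)h_{\opta}(\direction)$ together with strict separation. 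This is the right elementary route here, and you correctly isolate where finiteness does the work---closedness of $K$, which makes the maxima in \eqref{eq:strassen} attained and strict separation available---which is precisely the machinery Strassen's general theorem replaces with duality and compactness in the space of measures. Your flag about actions with $\opta=\emptyset$ is a genuine bookkeeping point the paper leaves implicit; restricting to actions that are a best response to some belief (the only ones that can appear in the support of an obedient outcome) is the standard and harmless fix. The only thing your proof does not deliver is the infinite-dimensional statement invoked in \autoref{remark:strassen}, but that generality is not needed for \autoref{theorem:bce-c}.
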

Whereas Theorem 3 in \cite{strassen1965existence} requires that \autoref{eq:strassen} holds for \emph{all} directions in $\reals^{\cardstates}$, \autoref{theorem:bce-c} states that verifying \autoref{eq:strassen} holds for \emph{finitely} many directions is enough to conclude that $\pair$ are BCE-consistent. To see this, note that Equations \ref{eq:bce-c-states} and \ref{eq:bce-c-actions} correspond to \autoref{eq:strassen} for specific directions $\direction\in\reals^{\cardstates}$. Indeed, \autoref{eq:bce-c-states} corresponds to  $\direction=-e_\type\in\reals^{\cardstates}$, where $e_\type$ is the vector with a $1$ in the \type-coordinate and $0$ otherwise. Instead, \autoref{eq:bce-c-actions} corresponds to the direction $\direction=-\deviationbb$, where \deviationbb\ is the vector with \type-coordinate $\deviationbb (\type)=\payoff(\aactionb,\type)-\payoff(\aactionbb,\type)$.  

To see why verifying that \autoref{eq:strassen} holds for directions $\{(-e_\type)_{\type\in\Types},(-\deviationbb)_{\aactionb,\aactionbb\in\Actions}\}$ is enough to determine that \autoref{eq:strassen} holds for all directions $\direction\in\reals^{\cardstates}$, note the following. First, for a fixed action \aactionb, the directions $\{(-e_\type)_{\type\in\Types},(-\deviationbb)_{\aactionbb\in\Actions}\}$ are the normal vectors that define the polyhedron \optab. Indeed, the directions $(-e_\type)_{\type\in\Types}$ correspond to the condition that the elements of \optab\ are non-negative, whereas the directions $(-\deviationbb)_{\aactionbb\in\Actions}$ correspond to the condition that action \aactionb\ is optimal for all beliefs in \optab. Second, it is immediate that in each of the maximization problems on the left hand side of \autoref{eq:strassen}, the maximum is attained at an extreme point of \opta. Standard results in convex analysis then imply that if \autoref{eq:strassen} holds at all normal directions defining the polyhedra $\{\opta:\aaction\in\Actions\}$, then it holds for all directions (cf. \citealp{hiriart2004fundamentals}).

We close \autoref{sec:single-agent} with a remark on the generality of the results in \cite{strassen1965existence}. It can be skipped with no loss of continuity. 

\begin{remark}[\citealp{strassen1965existence}]\label{remark:strassen}
Theorem 3 and Corollary 1 in \cite{strassen1965existence} hold more generally than our current assumptions. In present notation, Corollary 1 applies whenever (i) $\Types$ and \Actions\ are compact metric spaces and the mapping $\aaction\mapsto\Delta^*(\aaction)$ from \Actions\ to subsets of \Posteriors\ is such that $\cup_{\aaction\in\Actions}\{\aaction\}\times\Delta^*(\aaction)$ is closed within $\Actions\times\Posteriors$ endowed with the weak$^*$-topology.\footnote{Instead, \citet[Theorem 3]{strassen1965existence} requires that \Types\ is Polish, \Actions\ be a convex compact topological vector space, and an appropriate measurability condition on the mapping  $\aaction\mapsto\sup\{\int \direction(\type)\belief(d\type):\belief\in\opta\}$ for any continuous function $\direction$ on \Types.}

In other words, under the aforementioned assumptions, (an integral version of) \autoref{eq:strassen} characterizes the set of BCE-consistent marginals.\footnote{To be precise, \autoref{eq:strassen} now becomes for all continuous functions $c:\Types\mapsto\reals$,
\[\int_{\Types}c(\type)\prior(d\type)\leq\int_\Actions\sup\left\{\int_\Types c(\type)\belief(d\type):\belief\in\opta\right\}\marginal(d\aaction)\].} The finite model allows us to provide a sharper characterization by reducing the number of directions one needs to consider.
\end{remark}

\subsection{The core of Bayesian Persuasion}\label{sec:core}
 
 In this section we provide a different perspective on \autoref{theorem:bce-c}. Together with the marginal distributions, $\pair$, we are given a distribution over posteriors $\bsplit\in\Delta(\Posteriors)$ with mean equal to the prior \prior. \autoref{theorem: single agent BCE} below characterizes the set of such distributions over posteriors that can \emph{implement} the marginal \marginal. Whereas this characterization does not substitute that in \autoref{theorem:bce-c}, it allows us to illustrate how one would go about constructing an information structure that implements \marginal. Along the way we also establish formal connections with the literature on stochastic choice. For this reason, we work with the agent's stochastic choice rule $\{\scr(\cdot|\type):\type\in\Types\}$ instead of the belief system $\{\belief(\cdot|\aaction):\aaction\in\Actions\}$. 
 
\paragraph{Obedient stochastic choice} To understand the results that follow, it is useful to state the obedience and marginal conditions in terms of the stochastic choice rule: Given $\pair$, we want a stochastic choice rule that satisfies for all actions $\aaction\in\Actions$
\begin{align}\label{eq:marginal-actions-b}\tag{M$_{\Actions}$}
    \sum_{\type\in\Types}\prior(\type)\scr(\aaction|\type)=\marginal(\aaction),
\end{align}
and for all $\aaction,\aactionb\in\Actions$,
\begin{align}\label{eq:obedience-c}\tag{O$_{\scr}$}
\sum_{\type\in\Types}\prior(\type)\scr(\aaction|\type)\left[\payoff(\aaction,\type)-\payoff(\aactionb,\type)\right]\geq0.
\end{align}

 \paragraph{Distributions over posteriors and stochastic choice} Given a Bayes plausible distribution over posteriors $\bsplit\in\Delta(\Posteriors)$, constructing a state-dependent stochastic choice rule is almost at hand. Almost because a Bayes plausible distribution over posteriors does not specify how the agent breaks ties when indifferent. Indeed, to a Bayes plausible distribution over posteriors, $\bsplit(\belief)$, we can associate a decision rule $\strat:\Posteriors\mapsto\Delta(\Actions)$, describing the probability $\strat(\aaction|\belief)$ with which the agent takes action \aaction\ when her belief is \belief. The pair $(\bsplit,\strat)$ determines a stochastic choice rule $\{\scr(\cdot|\type):\type\in\Types\}$ as follows:
\begin{align}\label{eq:tau-feasibility}
    \scr(\aaction|\type)=\sum_{\belief\in\Posteriors}\bsplit(\belief)\frac{\belief(\type)}{\prior(\type)}\strat(\aaction|\belief).
\end{align}
\autoref{eq:tau-feasibility} suggests that conditions under which a stochastic choice rule that satisfies \ref{eq:marginal-actions-b} and \ref{eq:obedience-c} are intimately related to the existence of a Bayes plausible distribution \bsplit\ and a decision rule \strat\ that satisfy certain properties. In fact, the analysis that follows identifies conditions on Bayes plausible distributions over posteriors under which a decision rule exists that induces a stochastic choice rule--and hence a joint distribution $\joint\in\Delta(\Actions\times\Types)$--that satisfies all the constraints.

\paragraph{Distributions over posteriors as distributions over menus} 
Given a Bayes plausible  $\bsplit\in\Delta(\Posteriors)$, one can construct a measure over subsets \Actionsb\ of the set of actions \Actions\ as follows. For each $\belief\in\Posteriors$, let $\aaction^*(\belief)$ denote the agent's best response when her belief is \belief. That is, $\aaction^*(\belief)=\arg\max_{\aaction\in\Actions}\mathbb{E}_{\type\sim\belief}\left[\payoff(\aaction,\type)\right]$. For each $\Actionsb\subseteq\Actions$, define $\bsplit_\Actions(\Actionsb)$ as 
\begin{align}\label{eq:tau-consideration-sets}
    \bsplit_\Actions(\Actionsb)=\bsplit\{\belief\in\Posteriors:\aaction^*(\belief)=\Actionsb\}.
\end{align}
In words, each action subset \Actionsb\ has mass equal to the probability that \bsplit\ induces a belief under which \Actionsb\ is optimal.

\autoref{theorem: single agent BCE} characterizes when the distribution over posteriors \bsplit\ implements \marginal:

\begin{proposition}\label{theorem: single agent BCE}
Suppose $\pair$ are BCE-consistent. A Bayes plausible distribution over posteriors, $\bsplit\in\Delta(\Posteriors)$, implements $\marginal$ if and only if for all $\Actionsb\subseteq\Actions$, the following holds
    \begin{align}\label{eq:core}
\sum_{\aaction\in\Actionsb}\marginal(\aaction)\geq\sum_{\Actionsc\subseteq\Actionsb}\bsplit_\Actions(\Actionsc).
    \end{align}
\end{proposition}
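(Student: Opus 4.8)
The plan is to recast implementation as a bipartite transportation problem between the \emph{menu} distribution $\bsplit_\Actions$ and the demand $\marginal$, and then to read off \eqref{eq:core} as its Hall/min-cut feasibility condition. The first step is to substitute \eqref{eq:tau-feasibility} into the two constraints and exchange the order of summation. Since the tie-breaking rule $\strat(\cdot|\belief)$ puts weight only on optimal actions (so $\strat(\aaction|\belief)>0$ only if $\aaction\in\aaction^*(\belief)$), the obedience constraint \eqref{eq:obedience-c} becomes $\sum_{\belief}\bsplit(\belief)\strat(\aaction|\belief)\sum_{\type}\belief(\type)[\payoff(\aaction,\type)-\payoff(\aactionb,\type)]\geq0$, in which every nonzero term is nonnegative because $\aaction$ is optimal at $\belief$; obedience is therefore automatic. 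The same manipulation turns \eqref{eq:marginal-actions-b} into $\sum_{\belief}\bsplit(\belief)\strat(\aaction|\belief)=\marginal(\aaction)$. Grouping beliefs by their optimal set and writing $y(\Actionsb,\aaction)=\sum_{\belief:\aaction^*(\belief)=\Actionsb}\bsplit(\belief)\strat(\aaction|\belief)$, which is supported on $\aaction\in\Actionsb$, I get $\sum_{\aaction\in\Actionsb}y(\Actionsb,\aaction)=\bsplit_\Actions(\Actionsb)$ and $\sum_{\Actionsb\ni\aaction}y(\Actionsb,\aaction)=\marginal(\aaction)$. Thus $\bsplit$ implements $\marginal$ iff such a transport plan exists, and conversely any feasible plan defines a tie-breaking rule by normalizing $y(\Actionsb,\cdot)$ within each menu $\Actionsb$.

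Necessity of \eqref{eq:core} is then a one-line computation: for any $\Actionsb\subseteq\Actions$, discarding from $\sum_{\aaction\in\Actionsb}\marginal(\aaction)=\sum_{\aaction\in\Actionsb}\sum_{\belief}\bsplit(\belief)\strat(\aaction|\belief)$ all beliefs whose optimal set is not contained in $\Actionsb$ only lowers the sum, and for the retained beliefs $\strat(\cdot|\belief)$ places unit mass inside $\Actionsb$; collecting terms yields $\sum_{\aaction\in\Actionsb}\marginal(\aaction)\geq\sum_{\Actionsc\subseteq\Actionsb}\bsplit_\Actions(\Actionsc)$. Intuitively, any belief whose optimal actions all lie in $\Actionsb$ must select inside $\Actionsb$, so the demand $\marginal$ places on $\Actionsb$ has to absorb the entire $\bsplit$-mass of menus contained in $\Actionsb$.

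For sufficiency I would establish feasibility of the transport plan by a max-flow/min-cut (equivalently, deficiency-form Hall) argument, which is the crux of the proof. Construct the network with a source arc of capacity $\bsplit_\Actions(\Actionsb)$ into each menu $\Actionsb$, uncapacitated arcs $\Actionsb\to\aaction$ whenever $\aaction\in\Actionsb$, and arcs $\aaction\to$ sink of capacity $\marginal(\aaction)$. A plan reproducing $\marginal$ exists iff this network carries a flow of value $1$, i.e.\ iff every cut has capacity at least $1$. The uncapacitated internal arcs force every finite cut to keep a menu on the source side only when it is contained in the source-side action set $\Actionsb$; such a cut has capacity $\sum_{\aaction\in\Actionsb}\marginal(\aaction)+\sum_{\Actionsc\not\subseteq\Actionsb}\bsplit_\Actions(\Actionsc)$, and, using $\sum_{\Actionsc}\bsplit_\Actions(\Actionsc)=\sum_{\aaction}\marginal(\aaction)=1$, the requirement that this be at least $1$ rearranges exactly to \eqref{eq:core}. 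The main obstacle is precisely verifying that the minimum cut is attained at a cut of this subset form---that there is never an advantage to splitting a menu across the cut, which is what the uncapacitated arcs and the equal-mass normalization guarantee---so that the single family of subset inequalities \eqref{eq:core}, rather than a larger family, suffices. This is where one invokes the measure-valued extensions of Hall's marriage theorem in \cite{barseghyan2021heterogeneous} and \cite{azrieli2022marginal} (or, alternatively, the demand theorem of \cite{gale1957theorem}).

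Finally, I would record the interpretation that makes \eqref{eq:core} transparent: read as $\marginal(\Actionsb)\geq\sum_{\Actionsc\subseteq\Actionsb}\bsplit_\Actions(\Actionsc)$ for every coalition $\Actionsb$, with equality at $\Actionsb=\Actions$ (both sides equal $1$), condition \eqref{eq:core} says exactly that $\marginal$ lies in the core of the convex cooperative game $\Actionsb\mapsto\sum_{\Actionsc\subseteq\Actionsb}\bsplit_\Actions(\Actionsc)$ (cf.\ \autoref{remark:core}). The standing assumption that $\pair$ is BCE-consistent is used only to guarantee the set of implementing $\bsplit$ is nonempty; the equivalence above holds for any Bayes plausible $\bsplit$, since any implementing $\bsplit$ itself witnesses BCE-consistency of $\pair$.
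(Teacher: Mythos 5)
Your proof is correct and follows essentially the same route as the paper: both reduce implementation to the feasibility of a flow on the bipartite source--menus--actions--sink network of \autoref{fig:graph-theorem-1}, with \eqref{eq:core} emerging as the cut/Hall condition. The only difference is that you verify the key feasibility step directly via max-flow/min-cut, where the paper's main proof cites \citet[Proposition 9]{azrieli2022marginal} and its appendix proof invokes \cite{gale1957theorem} on the equivalent posterior--action graph.
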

To interpret \autoref{eq:core}, note the following. The left-hand side of \autoref{eq:core} is the probability under which the agent takes \emph{some} action \aaction\ in the set \Actionsb. Instead, the right-hand side of \autoref{eq:core} is the probability under which the agent finds optimal \emph{some} action in the set \Actionsb\ (but no action that is not in \Actionsb). \autoref{eq:core} then says that the frequency with which the agent takes actions in \Actionsb\ has to be at least the frequency with which an action in \Actionsb\ is optimal.

\begin{remark}[A core interpretation]\label{remark:core} \autoref{eq:core} implies that \marginal\ is in the \emph{core of the game} induced by the measure $\bsplit_\Actions$.\footnote{\cite{azrieli2022marginal} also note the connection between stochastic menu choice and cooperative games.} Indeed, given $\bsplit_\Actions$, define the \emph{cooperative game} $(\Actions,w_{\bsplit_\Actions})$ as follows. The set function $w_{\bsplit_\Actions}:2^{\Actions}\mapsto\mathbb{R}$ is given by $w_{\bsplit_{\Actions}}(\Actionsb)=\sum_{\Actionsc\subseteq\Actionsb}\bsplit_\Actions(\Actionsc)$. Because $w_{\bsplit_{\Actions}}\geq0$, the core of the game $(\Actions,w_{\bsplit_\Actions})$ is given by
 \[\mathrm{Core}(w_{\bsplit_{\Actions}})=\left\{p\in\Delta\left(\Actions\right):(\forall\Actionsb\subseteq\Actions)\sum_{\aaction\in\Actionsb}p(\aaction)\geq w_{\bsplit_\Actions}(\Actionsb)\right\}.\] 
 \end{remark}

The proof of \autoref{theorem: single agent BCE} is based on the following graphical representation of the BCE-consistency problem depicted in \autoref{fig:graph-theorem-1}. Consider the following graph. Nodes are (i) the actions $\aaction\in\Actions$, (ii) the (non-empty) action subsets $\Actionsb\subseteq\Actions$ (i.e., the elements of $2^\Actions\setminus\{\emptyset\}$), (iii) a source node $s$, and (iv) a sink node $t$. Edges are as follows. There is an edge of weight one between $\aaction\in\Actions$ and $\Actionsb\subseteq\Actions$ if and only if $\aaction\in\Actionsb$. There is an edge with weight $\marginal(\aaction)$ between the source $s$ and \aaction. Finally, there is an edge between $\Actionsb\subseteq\Actions$ and the sink $t$ with weight $\bsplit_\Actions(\Actionsb)$. The condition in \autoref{eq:core} ensures that a feasible flow exists throughout the network.\footnote{That is, a flow $f$ such that $\sum_{\aaction\in\Actions}f(s,\aaction)=\sum_{\Actionsb\in 2^\Actions}f(\Actionsb,t)=1$.}

\begin{figure}[t]
\centering
\resizebox{14cm}{!}{\input{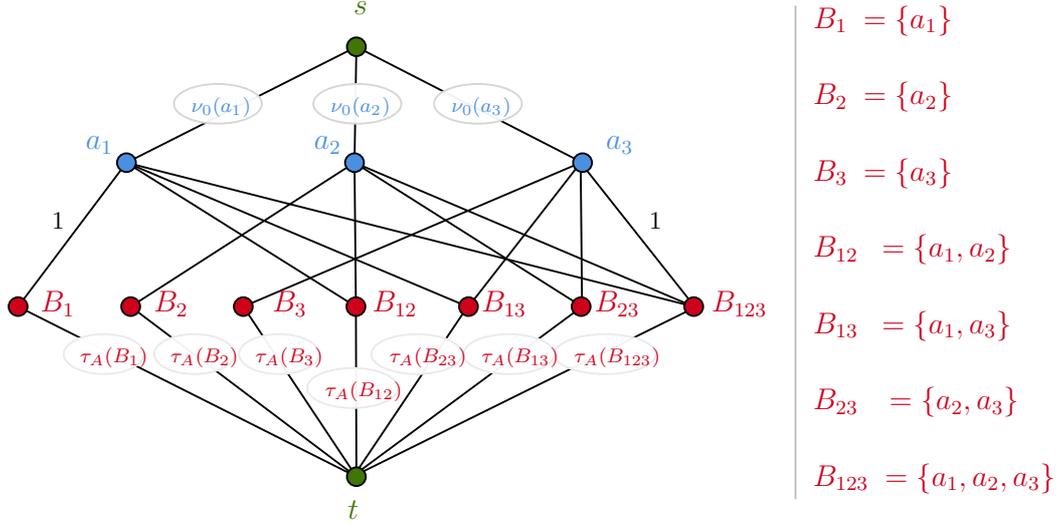}}
\caption{Graphical representation of the BCE-consistency problem with $3$ actions.}\label{fig:graph-theorem-1}
\end{figure}

\begin{proof}[Proof of \autoref{theorem: single agent BCE}]
It is immediate to show that if $\pair$ are BCE-consistent, then a Bayes plausible distribution over posteriors \bsplit\ exists such that \autoref{eq:core} holds.    

Suppose that \autoref{eq:core} holds for all $\Actionsb\subseteq\Actions$. 
    \citet[Proposition 9]{azrieli2022marginal} implies that a conditional probability system $\strat^\prime:2^\Actions\mapsto\Delta(\Actions)$ exists such that for all $\aaction\in\Actions$ 
    \begin{align}\label{eq:marriage-theorem}
       \marginal(\aaction)=\sum_{B:\aaction\in B}\bsplit_\Actions(\Actionsb)\strat^\prime(\aaction|\Actionsb).
    \end{align}
    The slight abuse of notation in the definition of the conditional probability system is justified since $\strat^\prime$ below plays the role of the decision rule in \autoref{eq:tau-feasibility}.
    
    We use the conditional probability system to create a s stochastic choice rule $\scr:\Types\mapsto\Delta(\Actions)$ as follows:
    \begin{align*}
\scr(\aaction|\type)=\sum_{B:\aaction\in B}\sum_{\belief:a^*(\belief)=B}\frac{\belief(\type)}{\prior(\type)}\bsplit(\belief)\strat^\prime(\aaction|B).
    \end{align*}
    The experiment has an intuitive explanation: We first draw a subset of actions \Actionsb\ using the measure $\bsplit_\Actions$ and then recommend to the agent which particular action she must take using the conditional probability system $\strat^\prime(\cdot|\Actionsb)$.

Define the information structure, $\joint\in\Delta(\Actions\times\Types)$ by letting $\joint(\aaction,\type)=\prior(\type)\scr(\aaction|\type)$.  To see that it has the desired properties, note first that
\begin{align*}
&\sum_{\aaction\in\Actions}\scr(\aaction|\type)=\sum_{\aaction\in\Actions}\sum_{\Actionsb:\aaction\in B}\sum_{\belief:a^*(\belief)=\Actionsb}\frac{\belief(\type)}{\prior(\type)}\bsplit(\belief)\strat^\prime(\aaction|\Actionsb)\\
&=\sum_{\Actionsb\subseteq\Actions}\left(\sum_{\aaction\in\Actionsb}\strat^\prime(\aaction|\Actionsb)\right)\sum_{\belief:\aaction^*(\belief)=\Actionsb}\bsplit(\belief)\frac{\belief(\type)}{\prior(\type)}
=\sum_{\Actionsb\subseteq \Actions}\sum_{\belief:\aaction^*(\belief)=\Actionsb}\bsplit(\belief)\frac{\belief(\type)}{\prior(\type)}=1
\end{align*}

Second, note that
\begin{align*}
\sum_{\type\in\Types}\joint(\aaction,\type)&=\sum_{\type\in\Types}\prior(\type)\scr(\aaction|\type)=\sum_{\type\in\Types}\sum_{\Actionsb:\aaction\in\Actionsb}\sum_{\belief:\aaction^*(\belief)=B}\belief(\type)\bsplit(\belief)\strat^\prime(\aaction|\Actionsb)\\
&=\sum_{\Actionsb:\aaction\in\Actionsb}\sum_{\belief:\aaction^*(\belief)=\Actionsb}\left(\sum_{\type\in\Types}\belief(\type)\right)\bsplit(\belief)\strat^\prime(\aaction|\Actionsb)=\marginal(\aaction),
\end{align*}
    by \autoref{eq:marriage-theorem}.

    Finally, note that the experiment is obedient: If \aaction\ is recommended with positive probability, then a set \Actionsb\ exists such that $\aaction\in\Actionsb$ and $\belief$ such that $\aaction^*(\belief)=\Actionsb$ is in the support of \bsplit, under which \aaction\ is optimal. Because $\scr(\aaction|\type)$ is obtained by averaging over beliefs in which \aaction\ is optimal, it remains optimal.
\end{proof}
\autoref{appendix:demand-supply} provides an alternative proof of \autoref{theorem: single agent BCE} using \citeauthor{gale1957theorem}'s network flow theorem.
\paragraph{Connection to stochastic choice:} The proof of \autoref{theorem: single agent BCE} connects two sets of conditional distribution over choices that arise in the stochastic choice literature: stochastic choices conditional on a state of the world--denoted by \scr\ in the proof--and stochastic choices out of a menu--denoted by $\strat^\prime$ in the proof. Indeed, the measure $\bsplit_\Actions$ can be interpreted as the frequency with which the agent faces different menus--action subsets in this case--whereas the measure \marginal\ represents the frequency with which the agent makes different choices. In other words, the pair $(\bsplit_\Actions,\marginal)$ is analogous to the data set in \cite{azrieli2022marginal}. Our ultimately goal, however, is to obtain the agent's stochastic choice rule, which we obtain relying on the Bayes' plausibility  of $\bsplit$.

\section{Applications}\label{sec:applications}
We consider in this section two applications of \autoref{theorem:bce-c} to simple multi-agent settings. \autoref{sec:public} studies under what conditions a pair of marginal distributions $\pair$ can be rationalized by a \emph{public} information structure. \autoref{sec:ring} shows that \autoref{theorem:bce-c} characterizes the set of M-BCE-consistent marginals.

\subsection{When is information public?}\label{sec:public}
We consider in this section the following multiplayer game. We assume $\nplayers\geq 1$ and that each player's utility function depends only her own action and the state of the world.\footnote{\cite{arieli2021feasible} dub this setting first-order Bayesian persuasion.} That is, for all players $\playerindex\in\{1,\dots,N\}$, all action profiles $(\actioni,\actionmi)\in\Actions$, and states of the world $\type\in\Types$,
\[\payoffi\left( \actioni,\actionmi, \type \right) =\payoffi\left( \actioni, \type \right). \] 

The analyst, who knows the base game \base\ and the marginal distribution of play $\marginal\in\Delta(\Actions)$,  wants to ascertain whether the distribution of play \marginal\ can be rationalized by a \emph{public} information structure (i.e., the players publicly observe the realization of a common signal structure before play).

As we show next, \autoref{theorem:bce-c} can be applied to address this question. In what follows, we rely on the following definition:

\begin{definition}[Public BCE-consistency]
The pair $\pair$ is \emph{public BCE-consistent} if: (i) $\pair$ are BCE-consistent, and (ii) a BCE $\joint\in\bceprior\cap\Joints\pair$ exists, whose information structure uses public signals alone.
\end{definition}

Consider now an auxiliary single-agent base game $\bar\base=\langle\Types, (\Actions,\bar u),\prior\rangle$. In this game, a player with payoff $\bar u(\aaction)=\sum_{\playerindex=1}^\nplayers\payoffi(\actioni,\type)$ chooses an action $\aaction\in\Actions=\times_{\playerindex\in\nplayers}\Actionsi$ under incomplete information about \type. 

The following result is an immediate corollary of \autoref{theorem:bce-c} and the focus on public signals:


\begin{corollary}\label{corollary: multi agent}
	$\pair$ are public BCE-consistent if and only if $\pair$ are BCE-consistent in base game $\bar\base$.
\end{corollary}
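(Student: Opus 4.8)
The plan is to show the two directions of the equivalence by exhibiting a bijection between the relevant BCE outcomes in the two games. The key observation is that in a first-order Bayesian persuasion setting, a \emph{public} information structure makes all players share the same posterior belief, so collectively they behave exactly like a single agent who observes the public signal and chooses the action profile $\aaction\in\Actions$.

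First I would unpack what a public BCE-consistent pair $\pair$ means concretely. A public information structure induces a common posterior, so the resulting outcome $\joint\in\Delta(\Actions\times\Types)$ can be decomposed as a Bayes plausible distribution over common posteriors together with, at each posterior $\belief$, a profile of (possibly correlated via the public signal) action choices. The crucial point is that under public signals, obedience of the profile is equivalent to each individual player's obedience \emph{at the common posterior}. Because $\payoffi(\actioni,\actionmi,\type)=\payoffi(\actioni,\type)$, player $\playerindex$'s obedience constraint \eqref{eq:obedience} decouples from $\actionmi$: at a common posterior $\belief$, player $\playerindex$ finds $\actioni$ optimal iff $\actioni\in\arg\max_{\actionbi}\sum_\type\belief(\type)\payoffi(\actionbi,\type)$. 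Hence a profile $\aaction$ is simultaneously obedient for all players at belief $\belief$ if and only if each coordinate $\actioni$ maximizes player $\playerindex$'s expected payoff at $\belief$ — which is \emph{exactly} the condition that $\aaction$ maximizes $\bar u(\aaction,\type)=\sum_\playerindex\payoffi(\actioni,\type)$ in expectation at $\belief$, since the sum is maximized coordinatewise when the payoffs are additively separable across players.

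This separability identity is the engine of the proof. For the forward direction, I would take a public BCE $\joint$, read off its common-posterior representation, and argue it is literally a single-agent BCE outcome of $\bar\base$ with the same action marginal \marginal: each posterior in the support induces an action profile optimal for $\bar u$, and Bayes plausibility of the posteriors is unchanged. For the converse, I would take a single-agent BCE $\bar\joint\in\mathrm{BCE}(\prior)$ of $\bar\base$ with $\bar\joint_\Actions=\marginal$, invoke \autoref{theorem: single agent BCE} (or directly the belief-system viewpoint of \autoref{theorem:strassen}) to obtain a Bayes plausible distribution over posteriors under which each realized profile is $\bar u$-optimal, and then \emph{publicly} announce that posterior to all players. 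The separability identity guarantees each player is individually obedient, so the reconstructed outcome is a genuine public BCE of the original game with action marginal \marginal.

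The main obstacle is the tie-breaking/recommendation subtlety already flagged in the text around \autoref{eq:tau-feasibility}: a distribution over common posteriors does not by itself specify which $\bar u$-optimal profile is played when several are optimal, and one must ensure the chosen recommendation is consistent \emph{across players} — i.e., that the players coordinate on the same profile rather than each optimizing independently and landing on an incompatible combination. Separability makes this coordination automatic for pure maximizers, but when $\arg\max$ is not a singleton the recommended profile must be a single coherent element of $\Actions$, which is precisely why the public signal (as opposed to independent private tie-breaking) is needed. I would handle this by carrying the \emph{joint} recommendation $\strat(\aaction\mid\belief)$ over action \emph{profiles} through the construction, exactly as in the proof of \autoref{theorem: single agent BCE}, so that the same realized profile is delivered to all players simultaneously. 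With separability ensuring coordinatewise optimality and the public signal ensuring a common recommendation, the equivalence closes.
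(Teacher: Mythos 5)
Your proposal is correct and takes essentially the same route as the paper: the separability identity $\Actions^*(\belief)=\times_{\playerindex\in\setplayers}\aaction_\playerindex^*(\belief)$ for the aggregate payoff $\bar\payoff$ is exactly the observation the paper relies on, and your treatment of ties via a common public recommendation over whole profiles matches the paper's public correlation device / signal-duplication argument. Nothing further is needed.
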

Because of the focus on public signal structures, the analysis of the multi-agent game reduces to the analysis of a single-agent problem. To see this, in a slight abuse of notation, let $\Actions^*(\belief)$ denote the set of actions that the agent with payoff $\bar\payoff$ finds optimal when their belief is \belief. It is immediate that $\Actions^*(\belief)=\times_{\playerindex\in\nplayers}\aaction_i^*(\belief)$, where for each player \playerindex, $\aaction_i^*(\belief)$ denotes the set of actions player \playerindex\ finds optimal when her belief is \belief. That is, the profile $\aaction=\left( \aaction_{1},\dots ,\aaction_\nplayers\right) \in \Actions$ is optimal for the agent with payoff $\bar\payoff$ if and only if action $\actioni$ is optimal for agent $i$, for all $\playerindex\in \setplayers$. And, given a posterior belief $\belief$, any distribution of (optimal) action profiles that the agent with payoff $\bar\payoff$ can generate, can also be generated by the players using a public correlation device or by duplicating signal realization, and vice versa.\footnote{For example, suppose that the signal realization $s$ induces the posterior belief $\belief$. Suppose also that under $\belief$, the agent with payoff $\bar\payoff$ selects the two optimal action profiles $\aaction,\aactionb\in \Actions^*(\belief)$ with equal probability. The same distribution of actions can be generated by the players: Indeed, one can \textquotedblleft split\textquotedblright\ the signal $s$ into two new signals, $s^{\prime }$ and $s^{\prime \prime }$, such that both new signals induce the same posterior belief $\belief$, and each of them is sent with half the probability of the original signal $s$. If whenever $s^{\prime }$ and $s^{\prime \prime }$ are realized, each agent acts according to her corresponding optimal action in the profiles $\aaction$
and $\aactionb$, respectively, the distribution over actions will coincide with that of the agent with payoff $\bar\payoff$.}
Notice that this equivalence no longer holds if either information is not public, or the players' utilities are interdependent.
\subsection{Ring-network games}\label{sec:ring}
We consider here ring-network games as in \cite{kneeland2015identifying}, extended to account for incomplete information. A ring-network game is a base game \base\ 
in which player's payoffs satisfy the following:
\begin{align}\label{eq:rn-game}\tag{RN-P}
    \payoff_1(\aaction,\type)&=\rnpayoff_1(\aaction_1,\type)\\
    (\forall \playerindex\geq2)\payoffi(\aaction,\type)&=\rnpayoff_\playerindex(\aaction_{\playerindex-1},\actioni).\nonumber
\end{align}
In words, player $1$ cares about their action and the state of the world, whereas for $i\geq2$ player $i$ cares about their action and that of player $i-1$. Ring-network games are used in the experimental literature that measures players' higher order beliefs to identify departures from Nash equilibrium. 

The analyst knows the ring-network base game and for each player $i$, player $i$'s action distribution, $\marginalbase_{0,\playerindex}\in\Delta(\Actionsi)$. The analyst wants to ascertain whether $(\prior,\marginalp)$ is M-BCE-consistent. Relying on \autoref{theorem:bce-c} and the ring-network structure, \autoref{proposition:ring-network} characterizes the set of M-BCE-consistent marginals: 

\begin{proposition}[M-BCE-consistency in ring-network games]\label{proposition:ring-network}
    The profile $(\prior,\marginalp)$ is M-BCE-consistent for the ring-network game $(\rnpayoff_\playerindex)_{\playerindex=1}^\nplayers$ if and only if the following holds:
    \begin{enumerate}
        \item $(\prior,\marginalbase_{0,1})$ are BCE-consistent in the base game $\langle\Types,\Actions_1,\rnpayoff_1,\prior\rangle$, 
        \item For all $i\geq2$, $(\marginalbase_{0,\playerindex-1},\marginalbase_{0,\playerindex})$ are BCE-consistent in the base game $\langle\Actions_{\playerindex-1},\Actionsi,\rnpayoff_\playerindex,\marginalbase_{0,\playerindex-1}\rangle$.
    \end{enumerate}
\end{proposition}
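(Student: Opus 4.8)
The plan is to exploit the fact that, in a ring-network game, each player's obedience constraint involves only a \emph{pairwise} marginal of the outcome distribution. By \eqref{eq:rn-game}, player $1$'s payoff difference $\rnpayoff_1(\aaction_1,\type)-\rnpayoff_1(\aactionb_1,\type)$ depends only on $(\aaction_1,\type)$, and for $i\geq 2$ player $i$'s payoff difference $\rnpayoff_\playerindex(\aaction_{i-1},\actioni)-\rnpayoff_\playerindex(\aaction_{i-1},\actionbi)$ depends only on $(\aaction_{i-1},\actioni)$. Consequently, summing \eqref{eq:obedience} over the remaining coordinates, player $1$'s obedience constraint is a function of the marginal $\joint_{\Actions_1\times\Types}$ alone, and player $i$'s obedience constraint ($i\geq 2$) is a function of the marginal $\joint_{\Actions_{i-1}\times\Actionsi}$ alone. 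This decoupling is what drives the pairwise characterization; combined with \autoref{theorem:bce-c} applied to each auxiliary game, it yields an operational test.

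For necessity, suppose $(\prior,\marginalp)$ is M-BCE-consistent and let $\joint$ be a witnessing BCE. I would check that $\joint^1\equiv\joint_{\Actions_1\times\Types}$ is a BCE in $\langle\Types,\Actions_1,\rnpayoff_1,\prior\rangle$: it satisfies the reduced obedience constraints by the decoupling above, its $\Types$-marginal is $\prior$ by \eqref{eq:martingale}, and its $\Actions_1$-marginal is $\marginalbase_{0,1}$ since $\joint$ is M-BCE-consistent; hence $(\prior,\marginalbase_{0,1})$ is BCE-consistent there. Analogously, for $i\geq 2$ the marginal $\joint^i\equiv\joint_{\Actions_{i-1}\times\Actionsi}$ is a BCE in $\langle\Actions_{i-1},\Actionsi,\rnpayoff_\playerindex,\marginalbase_{0,i-1}\rangle$, where player $i-1$'s action plays the role of the ``state'': it satisfies the reduced obedience constraints, its $\Actions_{i-1}$-marginal is $\marginalbase_{0,i-1}$, and its $\Actionsi$-marginal is $\marginalbase_{0,i}$, so $(\marginalbase_{0,i-1},\marginalbase_{0,i})$ is BCE-consistent there.

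For sufficiency—which I expect to be the main obstacle—the task is to \emph{glue} the given pairwise distributions into a single joint distribution over $(\type,\aaction_1,\dots,\aaction_\nplayers)$ reproducing every required pairwise marginal. The natural construction is a Markov chain $\type\to\aaction_1\to\cdots\to\aaction_\nplayers$: letting $\joint^1$ witness condition (1) and $\joint^i$ witness condition (2), define conditionals $\scr_1(\aaction_1|\type)=\joint^1(\aaction_1,\type)/\prior(\type)$ and $\scr_i(\actioni|\aaction_{i-1})=\joint^i(\aaction_{i-1},\actioni)/\marginalbase_{0,i-1}(\aaction_{i-1})$ (fixed arbitrarily on null events), and set
\[
\joint(\type,\aaction_1,\dots,\aaction_\nplayers)=\prior(\type)\,\scr_1(\aaction_1|\type)\prod_{i=2}^{\nplayers}\scr_i(\actioni|\aaction_{i-1}).
\]
The key verification is an induction showing the $\Actionsi$-marginal equals $\marginalbase_{0,i}$ for every $i$: true for $i=1$ by construction, with inductive step $\sum_{\aaction_{i-1}}\marginalbase_{0,i-1}(\aaction_{i-1})\scr_i(\actioni|\aaction_{i-1})=\marginalbase_{0,i}(\actioni)$, which is exactly the $\Actionsi$-marginal constraint on $\joint^i$. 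Since successive conditionals sum to one, marginalizing over downstream coordinates recovers $\joint_{\Actions_1\times\Types}=\joint^1$, and the Markov property together with the correct $\aaction_{i-1}$-marginal yields $\joint_{\Actions_{i-1}\times\Actionsi}=\joint^i$. Feeding these identities back into the decoupled obedience constraints establishes obedience for every player, while the $\Types$-marginal being $\prior$ gives \eqref{eq:martingale}; hence $\joint$ is a witness. The real content is recognizing the chain structure along the shared $\aaction_{i-1}$-marginals—once the pieces are stitched, the remaining steps are bookkeeping.
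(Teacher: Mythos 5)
Your proposal is correct and follows essentially the same route as the paper's proof: the decoupling of each player's obedience constraint into a pairwise marginal, followed by gluing the witnessing pairwise distributions via the Markov-chain product $\prior(\type)\scr_1(\aaction_1|\type)\prod_{i\geq2}\scr_i(\actioni|\aaction_{i-1})$ is exactly the construction of $\hat{\joint}$ in \autoref{appendix:ring-network}. Your inductive verification that the $\Actionsi$-marginals propagate correctly along the chain is the same bookkeeping the paper carries out when checking obedience for players $1$ and $2$.
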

Similar to \autoref{corollary: multi agent}, \autoref{proposition:ring-network} exploits the structure of the ring-network game to reduce it to a series of single-agent problems in which except for player $1$, the states are given by the actions of the preceding player and the prior distribution over this state space by the marginal over actions of the preceding player.  Indeed, for $\playerindex\geq2$, BCE-consistency of $(\marginalbase_{0,\playerindex-1},\marginalbase_{0,\playerindex})$ implies that an information structure exists that rationalizes player $\playerindex$'s choices as the outcome of some information structure under ``prior'' $\marginalbase_{0,\playerindex-1}$, whereas BCE-consistency of $(\marginalbase_{0,\playerindex-2},\marginalbase_{0,\playerindex-1})$\footnote{With the understanding that $\marginalbase_{0,0}=\prior$.} guarantees that the ``prior'' $\marginalbase_{0,\playerindex-1}$ is consistent with player $\playerindex-1$ observing the outcome of some information structure given their belief $\marginalbase_{0,\playerindex-2}$.

\bibliographystyle{ecta}
\bibliography{id-core}
\appendix

\section{Omitted proofs}\label{appendix:omitted}
\subsection{Proof of \autoref{theorem:bce-c}}\label{appendix:bce-c}
\paragraph{Preliminaries} Before stating the proof of \autoref{theorem:bce-c}, we collect some definitions and results from convex analysis that we use in the proof.

Define $\Directions(\aactionb)=\{(-e_\type)_{\type\in\Types},(-\deviationbb)_{\aactionbb\in\Actions}\}$ to be the normal directions to the polyhedron \optab, which is implicitly defined as the set of vectors \vectors\ in $\reals^{\cardstates}$ that satisfy:
\begin{align}\label{eq:H-optab}
(\forall\type\in\Types)(-e_\type)^T\vectors&\leq0\\
(\forall\aactionbb\in\Actions)(-\deviationbb)^T\vectors&\leq0.   \nonumber
\end{align}
We are omitting the condition that $\sum_{\type\in\Types}\vectors(\type)=1$, but this is irrelevant in what follows.

Recall that for $\vectors\in\reals^{\cardstates}$, the normal cone of \opta\ at \vectors, $\normal(\vectors|\opta)$, is defined as
\begin{align}
    \normal(\vectors|\opta)=\{\direction\in\reals^{\cardstates}:(\forall\vectorb\in\opta)\direction^T\vectorb\leq\direction^T\vectors\}.
\end{align}
That is, the normal cone of \opta\ at \vectors\ is the set of directions \direction\ for which \vectors\ solves $\max\{\direction^T\vectorb:\vectorb\in\opta\}$. Importantly, the normal cone of a polyhedron, like \opta, satisfies the following property. To state it, recall that given a set of points $\Directions$, the cone of $\Directions$ is defined as $\mathrm{cone}(\Directions)=\{\sum_{j=1}^J\alpha_j\direction_j:J<\infty,\direction_j\in C,\alpha_j\geq0\}$. 

\begin{lemma}[\protect{\citet[Example~5.2.6(b)]{hiriart2004fundamentals}}]\label{lemma:normal-cone}
    Suppose $\belief\in\opta$ and let $\binding(\belief)=\{\direction\in\Directions(\aaction):\direction^T\belief=0\}$. Then, $\normal(\belief|\opta)=\mathrm{cone}(\binding(\belief))$.
\end{lemma}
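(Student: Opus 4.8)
The plan is to prove the two inclusions $\cone(\binding(\belief))\subseteq\normal(\belief|\opta)$ and $\normal(\belief|\opta)\subseteq\cone(\binding(\belief))$ separately, working with \opta\ as the polyhedral cone $\{\vectors\in\reals^{\cardstates}:\direction^T\vectors\leq0\text{ for all }\direction\in\Directions(\aaction)\}$, where I drop the normalization $\sum_\type\vectors(\type)=1$ as the excerpt notes this is harmless. Throughout, $\binding(\belief)$ collects precisely the defining directions $\direction\in\Directions(\aaction)$ that hold with equality at \belief, and I use the definition $\normal(\belief|\opta)=\{\direction:\direction^T\vectorb\leq\direction^T\belief\text{ for all }\vectorb\in\opta\}$.

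The inclusion $\cone(\binding(\belief))\subseteq\normal(\belief|\opta)$ is a direct computation. Take $\direction=\sum_j\alpha_j\direction_j$ with $\alpha_j\geq0$ and each $\direction_j\in\binding(\belief)$. For any $\vectorb\in\opta$ we have $\direction_j^T\vectorb\leq0=\direction_j^T\belief$, the equality holding because $\direction_j$ is binding at \belief. Multiplying by $\alpha_j\geq0$ and summing gives $\direction^T\vectorb\leq\direction^T\belief$, which is exactly membership in $\normal(\belief|\opta)$.

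The reverse inclusion is where the work lies, and I would argue by contradiction via separation. Suppose $\direction\in\normal(\belief|\opta)$ but $\direction\notin\cone(\binding(\belief))$. Since $\binding(\belief)$ is a finite subset of $\Directions(\aaction)$, the cone $\cone(\binding(\belief))$ is finitely generated, hence closed and convex, so the separating hyperplane theorem yields a vector \vectors\ with $\direction^T\vectors>0$ and $v^T\vectors\leq0$ for every $v\in\cone(\binding(\belief))$; in particular $\direction_j^T\vectors\leq0$ for all $\direction_j\in\binding(\belief)$. I then check that $\belief+\epsilon\vectors\in\opta$ for small $\epsilon>0$: for each binding $\direction_j$, $\direction_j^T(\belief+\epsilon\vectors)=\epsilon\direction_j^T\vectors\leq0$, and for each non-binding $\direction_k$ the strict inequality $\direction_k^T\belief<0$ persists under a small enough perturbation. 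But then $\direction^T(\belief+\epsilon\vectors)-\direction^T\belief=\epsilon\direction^T\vectors>0$, contradicting $\direction\in\normal(\belief|\opta)$.

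The hard part will be making the two care points fully rigorous rather than the inclusions themselves. First, I must invoke that a finitely generated cone is closed, since separating a point from a cone requires closedness; this is exactly where the finiteness of $\Directions(\aaction)$—guaranteed because \Types\ and \Actions\ are finite—does the real work, and without it one would have to fall back on the general polar-cone machinery. Second, I should dispatch the degenerate case $\binding(\belief)=\emptyset$: then \belief\ lies in the interior of \opta, $\cone(\emptyset)=\{0\}$, and the perturbation argument (now valid for every direction \vectors) forces $\direction=0$, matching $\cone(\binding(\belief))=\{0\}$.
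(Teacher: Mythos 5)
Your proof is correct. Note, however, that the paper does not actually prove this statement: it is invoked as a known result, with the proof delegated to \citet[Example~5.2.6(b)]{hiriart2004fundamentals}. What you have written is essentially the standard textbook argument for computing the normal cone of a polyhedron given by homogeneous inequalities: the easy inclusion $\cone(\binding(\belief))\subseteq\normal(\belief|\opta)$ by direct summation, and the converse by separating a point from the finitely generated (hence closed, convex) cone $\cone(\binding(\belief))$ and using the separating direction as a feasible perturbation $\belief+\epsilon\vectors\in\opta$ that strictly increases $\direction^T(\cdot)$. Your two "care points" are exactly the right ones: closedness of finitely generated cones is what licenses the separation step (and is where finiteness of $\Types$ and $\Actions$ enters), and the conical form of the separation ($\direction^T\vectors>0\geq v^T\vectors$ for all $v$ in the cone, rather than merely separation by an affine hyperplane) deserves the one extra line you implicitly use — since $0$ lies in the cone and the cone is scale-invariant, the affine separator can always be taken through the origin. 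The treatment of the degenerate case $\binding(\belief)=\emptyset$ and the consistency with the paper's convention of dropping the normalization $\sum_\type\vectors(\type)=1$ are both handled correctly. In short, you supply a self-contained proof where the paper supplies a citation; the content matches the cited source.
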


\begin{proof}[Proof of \autoref{theorem:bce-c}]
Necessity of Equations \ref{eq:bce-c-states} and \ref{eq:bce-c-actions} follows from \citet[Theorem 3]{strassen1965existence}.

We now argue sufficiency.  Given \autoref{theorem:strassen}, it suffices to show that Equations \ref{eq:bce-c-states} and \ref{eq:bce-c-actions} imply \autoref{eq:strassen} holds for all $\direction\in\reals^{\cardstates}$.

    For fixed \direction, we can write \autoref{eq:strassen} as follows:
    \begin{align}\label{eq:strassen-rw}
        \sum_{\aaction\in\Actions}\marginal(\aaction)\max_{\belief\in\opta}c^T(\belief-\prior)\geq0.
    \end{align}
    Thus, \autoref{eq:strassen} holds for all directions $\direction\in\reals^{\cardstates}$ if and only if
    \begin{align}\label{eq:min-max}
\min_{\direction\in\reals^{\cardstates}}\sum_{\aaction\in\Actions}\marginal(\aaction)\max_{\belief\in\opta}c^T(\belief-\prior)\geq0.
    \end{align}
    Note that we can replace $\opta$ for the set of extreme points of \opta, $\opt_E(\aaction)$ in \autoref{eq:min-max}. That is, 
\begin{align}\label{eq:min-max-e}
\min_{\direction\in\reals^{\cardstates}}\sum_{\aaction\in\Actions}\marginal(\aaction)\max_{\belief\in\opt_E(\aaction)}c^T(\belief-\prior)\geq0.
    \end{align}
Now, let $E=\prod\{\opt_E(\aaction):\aaction\in\Actions\}$. For $\bar{\belief}_e\equiv(\belief_{e,\aaction})_{\aaction\in\Actions}\in E$, let \[\Directions(\bar{\belief}_e)=\{\direction\in\reals^{\cardstates}:(\forall\aaction\in\Actions)\;\direction^T\belief_{e,\aaction}=\max_{\belief\in\opta}\direction^T\belief\}.\]
Then, we can write the left hand side of \autoref{eq:min-max-e} as follows:
\begin{align}\label{eq:min-min-e}
    \min_{\bar{\belief}_e\in E}\min_{\direction\in\Directions(\bar{\belief}_e)}\sum_{\aaction\in\Actions}\marginal(\aaction)\max_{\belief\in\opt_E(\aaction)}\direction^T(\belief-\prior).
\end{align}
Note that for each $\bar{\belief}_e\in E$
\begin{align}
    \Directions(\bar{\belief}_e)=\cap_{\aaction\in\Actions}\normal\left(\belief_{e,\aaction}|\opta\right),
\end{align}
and by \autoref{lemma:normal-cone}, $\normal\left(\belief_{e,\aaction}|\opta\right)\subseteq\Directions(\aaction)$. Thus, Equations \ref{eq:bce-c-states} and \ref{eq:bce-c-actions} ensure that the term inside $\min_{\bar{\belief}_e\in E}$ is non-negative, so that \autoref{eq:strassen} holds for all $\direction\in\reals^{\cardstates}$.
    
\end{proof}

\subsection{Proof of \autoref{proposition:ring-network}}\label{appendix:ring-network}
\begin{proof}[Proof of \autoref{proposition:ring-network}]
In the ring-network base game, for a joint distribution $\joint\in\Delta(\Actions\times\Types)$, the obedience constraints can be written as follows:
\begin{align*}
    (\forall\aaction_1,\aactionb_1\in\Actions_1)\sum_{\type\in\Types}\joint_{\Types\times\Actions_1}(\aaction_1,\type)\left(\rnpayoff(\aaction_1,\type)-\rnpayoff(\aactionb_1,\type)\right)&\geq0\\
    (\forall \playerindex\in\{2,\dots,\nplayers\})(\forall\actioni,\actionbi\in\Actionsi)\sum_{\aaction_{\playerindex-1}\in\Actions_{\playerindex-1}}\joint_{\Actions_{\playerindex-1},\playerindex}(\aaction,\type)\left(\rnpayoff(\aaction_{\playerindex-1},\actioni)-\rnpayoff(\aaction_{\playerindex-1},\actionbi)\right)&\geq0,
\end{align*}
where $\joint_{\Types\times\Actions_1}$ is the marginal of \joint\ over $\Types\times\Actions_1$ and similarly for $\playerindex\geq2$, $\joint_{\Actions_{\playerindex-1}\times\Actionsi}$ is the marginal of \joint\ over $\Actions_{\playerindex-1}\times\Actionsi$. Thus, it is immediate that the conditions in \autoref{proposition:ring-network} are necessary for $(\prior,\marginalp)$ to be M-BCE-consistent. 

For sufficiency, note that \autoref{theorem:bce-c} implies that under the conditions of \autoref{proposition:ring-network}, $\left(\joint_{\Types\times\Actions_1},\dots,\joint_{\Actions_{\nplayers-1}\times\Actions_{\nplayers}}\right)$ exist each of which satisfy the respective marginal conditions and obedience constraints. 

Given these distributions, define $\hat{\joint}\in\Delta(\Actions\times\Types)$ as follows: for each $(\aaction,\type)\in\Actions\times\Types$
\begin{align}
    \hat{\joint}(\aaction,\type)=\joint_{\Actions_1\times\Types}(\aaction_1,\type)\joint_{\Actions_1\times\Actions_2}(\aaction_2|\aaction_1)\times\dots\joint_{\Actions_{N-1}\times\Actions_N}(\aaction_N|\aaction_{N-1}),
\end{align}
where abusing notation we let for $\playerindex\geq2$, $\joint_{\Actions_{\playerindex-1}\times\Actionsi}(\cdot|\aaction_{\playerindex-1})$ denote the distribution $\joint_{\Actions_{\playerindex-1}\times\Actionsi}$ conditional on $\aactionb_{\playerindex-1}=\aaction_{\playerindex-1}$.

Note that $\hat{\joint}(\aaction,\type)$ satisfies the obedience constraints of player $1$ if and only if $\joint_{\Actions_1\times\Types}(\cdot)$ does. Indeed, for all $\aaction_1,\aactionb_1$, we have
\begin{align}    &\sum_{\aaction_{-1},\type}\hat{\joint}(\aaction_1,\aaction_{-1},\type)\left(\rnpayoff_1(\aaction_1,\type)-\rnpayoff_1(\aactionb_1,\type)\right)=\nonumber\\
    &\sum_{\type}\joint_{\Actions_1\times\Types}(\aaction_1,\type)\left(\rnpayoff_1(\aaction_1,\type)-\rnpayoff_1(\aactionb_1,\type)\right)\sum_{(\aaction_2,\dots,\aaction_\nplayers)}\prod_{\playerindex=2}^\nplayers\joint_{\Actions_{\playerindex-1}\times\Actionsi}(\actioni|\aaction_{\playerindex-1})=\nonumber\\
    &\sum_{\type}\joint_{\Actions_1\times\Types}(\aaction_1,\type)\left(\rnpayoff_1(\aaction_1,\type)-\rnpayoff_1(\aactionb_1,\type)\right).
\end{align}
Consider now player $\playerindex\geq 2$. For simplicity, fix $\playerindex=2$--the rest of the players follow immediately. Then, let $\aaction_2,\aactionb_2\in\Actions_2$. We want to check that \joint\ satisfies the obedience constraint of player $2$ if and only if $\joint_{\Actions_1\times\Actions_2}$ does.
\begin{align*}
    &\sum_{\aaction_{-2},\type}\hat{\joint}(\aaction_2,\aaction_{-2},\type)\left(\rnpayoff_2(\aaction_1,\aaction_2)-\rnpayoff_2(\aaction_1,\aactionb_2)\right)=\\
    &\sum_{\aaction_1,\type}\joint_{\Actions_1\times\Types}(\aaction_1,\type)\joint_{\Actions_1\times\Actions_2}(\aaction_2|\aaction_1)\left(\rnpayoff_2(\aaction_1,\aaction_2)-\rnpayoff_2(\aaction_1,\aactionb_2)\right)\sum_{(\aaction_3,\dots,\aaction_\nplayers)}\prod_{\playerindex=3}^\nplayers\joint_{\Actions_{\playerindex-1}\times\Actionsi}(\actioni|\aaction_{\playerindex-1})=\\
&\sum_{\aaction_1\in\Actions_1}\left(\sum_{\type}\joint_{\Actions_1\times\Types}(\aaction_1,\type)\right)\joint_{\Actions_1\times\Actions_2}(\aaction_2|\aaction_1)\left(\rnpayoff_2(\aaction_1,\aaction_2)-\rnpayoff_2(\aaction_1,\aactionb_2)\right)=\\
    &\sum_{\aaction_1\in\Actions_1}\marginalbase_{01}(\aaction_1)\joint_{\Actions_1\times\Actions_2}(\aaction_2|\aaction_1)\left(\rnpayoff_2(\aaction_1,\aaction_2)-\rnpayoff_2(\aaction_1,\aactionb_2)\right)=\\
    &\sum_{\aaction_1\in\Actions_1}\joint_{\Actions_1\times\Actions_2}(\aaction_1,\aaction_2)\left(\rnpayoff_2(\aaction_1,\aaction_2)-\rnpayoff_2(\aaction_1,\aactionb_2)\right),
\end{align*}
    where the third equality follows from the assumption that $\joint_{\Actions_1\times\Types}$ satisfies the marginal constraints for player $1$.
\end{proof}

\section{A demand-supply interpretation}\label{appendix:demand-supply}
We provide here an alternative, but still network based, proof of \autoref{theorem: single agent BCE} using the fundamental results of \cite{gale1957theorem} on demand and supply in a network. 

\paragraph{Flows in networks:} The problem in \cite{gale1957theorem} can be described as follows. Given a
graph $\left( V,E\right) $, suppose that to each node $v\in V$ corresponds a
real number $\demandbase(v)$. If $\demandbase(v)>0$ we interpret $\left\vert \demandbase(v)\right\vert $
as the \emph{demand} of node $v$ for some homogenous good. If $\demandbase(v)<0$ we
interpret $\left\vert \demandbase(v)\right\vert $ as the \emph{supply} of the good by $
v$. To each edge $(v,v^{\prime })\in E$ correspond two nonnegative real
numbers $c(v,v^{\prime })$ and $c\left( v^{\prime },v\right) $, the capacity
of this edge, which assign an upper bound to the possible flow of the good
from $v$ to $v^{\prime }$ and from $v^{\prime }$ to $v$, respectively. The
demand $\demand$ is called \emph{feasible} if there is a flow
in the graph such that the flow along each edge is no greater than its
capacity, and the net flow into (out of) each node is at least (at most)
equal to the demand (supply) at that node. The demand problem identifies the
conditions under which a given demand $\demand$ is feasible
in the graph.

\paragraph{BCE-consistency and the demand problem:} Fix a Bayes plausible distribution $\bsplit \in \Delta \left( \Delta \left( \type \right)
\right)$ and denote its support by $T=\mathrm{supp}\;\bsplit$. Because \Actions\ is finite, it is without loss of generality to assume that $T$ is a finite set \citep{myerson1982optimal,kamenica2011bayesian}. 

The conditions in \cite{gale1957theorem} can be used to check whether $\bsplit $ implements the action distribution $\marginal$: That is, that a decision rule \strat\ exists that together with \bsplit\ define an obedient experiment (see \autoref{eq:tau-feasibility}). To that end, we construct a (bipartite) graph in which posterior beliefs serve as supply nodes, and actions serve as demand nodes. That is, the homogeneous good in our construction can be thought of as probability that \textquotedblleft flows\textquotedblright\ from induced posterior beliefs to actions. The construction of the graph guarantees that if the demand is feasible, we can specify choices for the agent such that the probabilities according to which she breaks ties between optimal actions at each posterior belief induce the (ex-ante) desired action distribution $\marginal$.

Formally, define the graph $\graphp(\bsplit)=\left( \Actions\cup T,E\right)$ as follows.  To each action $\aaction\in\Actions$ corresponds a node that demands the marginal probability of \aaction, i.e. $\demand\left( \aaction\right) =\marginal\left( \aaction\right)$. To each belief $\belief \in T$ corresponds a node that supplies the probability with which $\belief $ is realized in $\bsplit $, i.e. $\demand\left( \belief \right) =-\bsplit \left(\belief \right)$. For any belief-action pair $(\belief,\aaction)$, an edge $\left( \belief ,\aaction\right) \in E$ exists between the nodes $\belief $ and $a$ if and only if  action $a$ is optimal under posterior belief $\belief $, that is if and only if, $\aaction\in\aaction^{\ast }\left(\belief \right)$. Finally, for any edge $\left( \belief ,\aaction\right) \in E$, the edge's flow capacity is given by $c\left( \belief ,\aaction\right) =\infty $ and $c\left( a,\belief \right) =0$. That is, there is no upper bound on the flow from $\belief $ to $a$, but there cannot be a flow from $a$ to $\belief $. We denote the flow from node $\belief $ to node $a$ by $f\left( \belief ,\aaction\right) $. If there is no edge between $\belief $ and $a$ then $f\left( \belief ,\aaction\right) =0$. The right-hand side panel of Figure \ref{fig:graph_example} illustrates the graph $\graphp$.

\begin{figure}[h!]
    \begin{subfigure}{0.45\textwidth}
        \begin{center}
            \resizebox{5.5cm}{!}{\tikzset{every picture/.style={line width=0.75pt}} 

\begin{tikzpicture}[x=0.75pt,y=0.75pt,yscale=-1,xscale=1]

\draw  [fill={rgb, 255:red, 74; green, 144; blue, 226 }  ,fill opacity=0.2 ] (230,250) -- (230,170) -- (180,132) -- (100,250) -- cycle ;
\draw   (235,50) -- (370,250) -- (100,250) -- cycle ;
\draw  [fill={rgb, 255:red, 208; green, 2; blue, 27 }  ,fill opacity=0.2 ] (286,126) -- (370,250) -- (230,250) -- (230,170) -- cycle ;
\draw  [fill={rgb, 255:red, 0; green, 0; blue, 0 }  ,fill opacity=1 ] (257,158) .. controls (257,155.24) and (259.24,153) .. (262,153) .. controls (264.76,153) and (267,155.24) .. (267,158) .. controls (267,160.76) and (264.76,163) .. (262,163) .. controls (259.24,163) and (257,160.76) .. (257,158) -- cycle ;
\draw  [fill={rgb, 255:red, 74; green, 144; blue, 226 }  ,fill opacity=1 ] (281,126) .. controls (281,123.24) and (283.24,121) .. (286,121) .. controls (288.76,121) and (291,123.24) .. (291,126) .. controls (291,128.76) and (288.76,131) .. (286,131) .. controls (283.24,131) and (281,128.76) .. (281,126) -- cycle ;
\draw  [fill={rgb, 255:red, 74; green, 144; blue, 226 }  ,fill opacity=1 ] (175,132) .. controls (175,129.24) and (177.24,127) .. (180,127) .. controls (182.76,127) and (185,129.24) .. (185,132) .. controls (185,134.76) and (182.76,137) .. (180,137) .. controls (177.24,137) and (175,134.76) .. (175,132) -- cycle ;
\draw  [fill={rgb, 255:red, 74; green, 144; blue, 226 }  ,fill opacity=1 ] (230,50) .. controls (230,47.24) and (232.24,45) .. (235,45) .. controls (237.76,45) and (240,47.24) .. (240,50) .. controls (240,52.76) and (237.76,55) .. (235,55) .. controls (232.24,55) and (230,52.76) .. (230,50) -- cycle ;
\draw  [fill={rgb, 255:red, 74; green, 144; blue, 226 }  ,fill opacity=1 ] (365,250) .. controls (365,247.24) and (367.24,245) .. (370,245) .. controls (372.76,245) and (375,247.24) .. (375,250) .. controls (375,252.76) and (372.76,255) .. (370,255) .. controls (367.24,255) and (365,252.76) .. (365,250) -- cycle ;
\draw  [fill={rgb, 255:red, 74; green, 144; blue, 226 }  ,fill opacity=1 ] (225,250) .. controls (225,247.24) and (227.24,245) .. (230,245) .. controls (232.76,245) and (235,247.24) .. (235,250) .. controls (235,252.76) and (232.76,255) .. (230,255) .. controls (227.24,255) and (225,252.76) .. (225,250) -- cycle ;
\draw  [fill={rgb, 255:red, 74; green, 144; blue, 226 }  ,fill opacity=1 ] (96,249) .. controls (96,246.24) and (98.24,244) .. (101,244) .. controls (103.76,244) and (106,246.24) .. (106,249) .. controls (106,251.76) and (103.76,254) .. (101,254) .. controls (98.24,254) and (96,251.76) .. (96,249) -- cycle ;
\draw  [fill={rgb, 255:red, 74; green, 144; blue, 226 }  ,fill opacity=1 ] (225,170) .. controls (225,167.24) and (227.24,165) .. (230,165) .. controls (232.76,165) and (235,167.24) .. (235,170) .. controls (235,172.76) and (232.76,175) .. (230,175) .. controls (227.24,175) and (225,172.76) .. (225,170) -- cycle ;

\draw (264,156.4) node [anchor=north west][inner sep=0.75pt]    {$\prior$};
\draw (225,22.4) node [anchor=north west][inner sep=0.75pt]    {$\belief _{1}$};
\draw (372,253.4) node [anchor=north west][inner sep=0.75pt]    {$\belief _{2}$};
\draw (79,250.4) node [anchor=north west][inner sep=0.75pt]    {$\belief _{3}$};
\draw (157,204.4) node [anchor=north west][inner sep=0.75pt]    {$\opt( \aaction_{3})$};
\draw (269,204.4) node [anchor=north west][inner sep=0.75pt]    {$\opt( \aaction_{2})$};
\draw (213,100.4) node [anchor=north west][inner sep=0.75pt]    {$\opt( \aaction_{1})$};
\draw (293,105.4) node [anchor=north west][inner sep=0.75pt]    {$\belief _{12}$};
\draw (217,140.4) node [anchor=north west][inner sep=0.75pt]    {$\belief _{123}$};
\draw (220,260.4) node [anchor=north west][inner sep=0.75pt]    {$\belief _{23}$};
\draw (153,108.4) node [anchor=north west][inner sep=0.75pt]    {$\belief _{13}$};

\end{tikzpicture}}
        \end{center}
    \caption{}
    \end{subfigure}\quad\quad
    \begin{subfigure}{0.45\textwidth}
        \begin{center}
            \resizebox{7cm}{!}{\tikzset{every picture/.style={line width=0.75pt}} 

\begin{tikzpicture}[x=0.75pt,y=0.75pt,yscale=-1,xscale=1]

\draw    (119.6,50.6) -- (188,50.6) ;
\draw    (119.6,50.6) -- (188,100.6) ;
\draw    (119.6,100.6) -- (188,100.6) ;
\draw    (119.6,150.6) -- (188,150.6) ;
\draw    (119.6,150.6) -- (188,100.6) ;
\draw    (119.6,150.6) -- (188,50.6) ;
\draw  [fill={rgb, 255:red, 74; green, 144; blue, 226 }  ,fill opacity=1 ] (114.6,50.6) .. controls (114.6,47.84) and (116.84,45.6) .. (119.6,45.6) .. controls (122.36,45.6) and (124.6,47.84) .. (124.6,50.6) .. controls (124.6,53.36) and (122.36,55.6) .. (119.6,55.6) .. controls (116.84,55.6) and (114.6,53.36) .. (114.6,50.6) -- cycle ;
\draw  [fill={rgb, 255:red, 74; green, 144; blue, 226 }  ,fill opacity=1 ] (114.6,100.6) .. controls (114.6,97.84) and (116.84,95.6) .. (119.6,95.6) .. controls (122.36,95.6) and (124.6,97.84) .. (124.6,100.6) .. controls (124.6,103.36) and (122.36,105.6) .. (119.6,105.6) .. controls (116.84,105.6) and (114.6,103.36) .. (114.6,100.6) -- cycle ;
\draw  [fill={rgb, 255:red, 74; green, 144; blue, 226 }  ,fill opacity=1 ] (114.6,150.6) .. controls (114.6,147.84) and (116.84,145.6) .. (119.6,145.6) .. controls (122.36,145.6) and (124.6,147.84) .. (124.6,150.6) .. controls (124.6,153.36) and (122.36,155.6) .. (119.6,155.6) .. controls (116.84,155.6) and (114.6,153.36) .. (114.6,150.6) -- cycle ;
\draw  [fill={rgb, 255:red, 208; green, 2; blue, 27 }  ,fill opacity=1 ] (183,50.6) .. controls (183,47.84) and (185.24,45.6) .. (188,45.6) .. controls (190.76,45.6) and (193,47.84) .. (193,50.6) .. controls (193,53.36) and (190.76,55.6) .. (188,55.6) .. controls (185.24,55.6) and (183,53.36) .. (183,50.6) -- cycle ;
\draw  [fill={rgb, 255:red, 208; green, 2; blue, 27 }  ,fill opacity=1 ] (183,100.6) .. controls (183,97.84) and (185.24,95.6) .. (188,95.6) .. controls (190.76,95.6) and (193,97.84) .. (193,100.6) .. controls (193,103.36) and (190.76,105.6) .. (188,105.6) .. controls (185.24,105.6) and (183,103.36) .. (183,100.6) -- cycle ;
\draw  [fill={rgb, 255:red, 208; green, 2; blue, 27 }  ,fill opacity=1 ] (183,150.6) .. controls (183,147.84) and (185.24,145.6) .. (188,145.6) .. controls (190.76,145.6) and (193,147.84) .. (193,150.6) .. controls (193,153.36) and (190.76,155.6) .. (188,155.6) .. controls (185.24,155.6) and (183,153.36) .. (183,150.6) -- cycle ;

\draw (93.2,87.4) node [anchor=north west][inner sep=0.75pt]    {$\mu _{2}$};
\draw (87.2,37.2) node [anchor=north west][inner sep=0.75pt]    {$\mu _{12}$};
\draw (81.2,137.6) node [anchor=north west][inner sep=0.75pt]    {$\mu _{123}$};
\draw (202.2,39.6) node [anchor=north west][inner sep=0.75pt]    {$a_{1}$};
\draw (202.2,90.2) node [anchor=north west][inner sep=0.75pt]    {$a_{2}$};
\draw (202.2,140.8) node [anchor=north west][inner sep=0.75pt]    {$a_{3}$};
\draw (232.8,40.4) node [anchor=north west][inner sep=0.75pt]  [color={rgb, 255:red, 208; green, 2; blue, 27 }  ,opacity=1 ]  {$\nu _{0}( a_{1})$};
\draw (232,90.5) node [anchor=north west][inner sep=0.75pt]  [color={rgb, 255:red, 208; green, 2; blue, 27 }  ,opacity=1 ]  {$\nu _{0}( a_{2})$};
\draw (233.2,140.6) node [anchor=north west][inner sep=0.75pt]  [color={rgb, 255:red, 208; green, 2; blue, 27 }  ,opacity=1 ]  {$\nu _{0}( a_{3})$};
\draw (16,36.8) node [anchor=north west][inner sep=0.75pt]  [color={rgb, 255:red, 74; green, 144; blue, 226 }  ,opacity=1 ]  {$-\tau ( \mu _{12})$};
\draw (16,87) node [anchor=north west][inner sep=0.75pt]  [color={rgb, 255:red, 74; green, 144; blue, 226 }  ,opacity=1 ]  {$-\tau ( \mu _{2})$};
\draw (16,137.2) node [anchor=north west][inner sep=0.75pt]  [color={rgb, 255:red, 74; green, 144; blue, 226 }  ,opacity=1 ]  {$-\tau ( \mu _{123})$};
\draw (30.5,4.8) node [anchor=north west][inner sep=0.75pt]  [color={rgb, 255:red, 74; green, 144; blue, 226 }  ,opacity=1 ]  {$supply$};
\draw (223.3,4.8) node [anchor=north west][inner sep=0.75pt]  [color={rgb, 255:red, 208; green, 2; blue, 27 }  ,opacity=1 ]  {$demand$};

\end{tikzpicture}}
        \end{center}
    \caption{}
    \end{subfigure}
    \caption{ Illustration of the supply-demand proof of \autoref{theorem: single agent BCE} with $|A|=\cardstates=3$.
    The simplex on the left-hand side depicts the optimal action(s) for each posterior belief. The graph on the right-hand side corresponds to the Bayes plausible distribution over posteriors $\bsplit$ supported on $T=\{\belief_{12}, \belief_{2}, \belief_{123}\}$.}
    \label{fig:graph_example}
\end{figure}
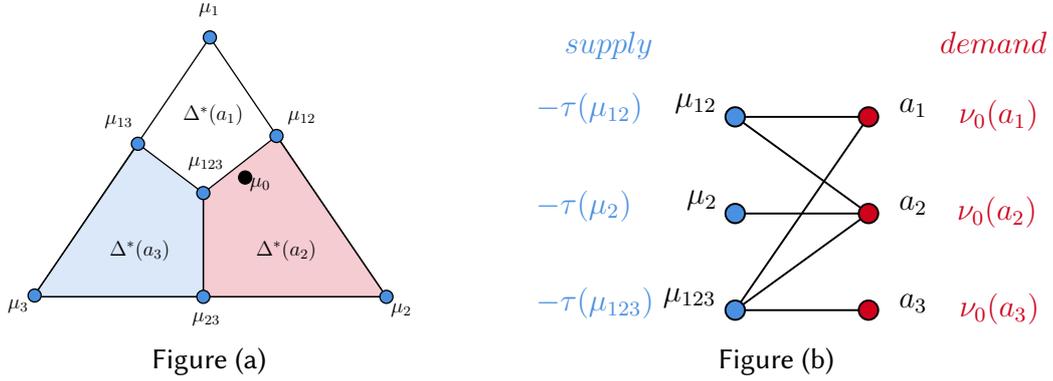

\autoref{proposition:equivalence of representation} motivates the connection between our problem and that in \cite{gale1957theorem}. 
\begin{proposition}[Feasibility and BCE-consistency]\label{proposition:equivalence of representation}
The Bayes plausible distribution over posteriors \bsplit\ implements \marginal\ if and only if \demand\ is feasible on $\graphp(\bsplit)$.   
\end{proposition}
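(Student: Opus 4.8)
The plan is to set up an explicit dictionary between obedient decision rules and feasible flows through the identity $f(\belief,\aaction)=\bsplit(\belief)\strat(\aaction\mid\belief)$ for $\belief\in T=\mathrm{supp}\,\bsplit$ and $\aaction\in\Actions$. Under this identity I expect three correspondences: that $\strat(\cdot\mid\belief)$ is a probability vector over $\Actions$ corresponds to the supply constraint at node $\belief$ binding; that $\strat$ recommends only optimal actions (so the experiment is obedient) corresponds to $f$ being supported on the edge set $E$ of $\graphp(\bsplit)$; and that the action marginal induced by $(\bsplit,\strat)$ equals $\sum_{\belief\in T}f(\belief,\aaction)$. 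The single computation I reuse throughout is that, since each $\belief$ is a probability distribution on $\Types$, summing \autoref{eq:tau-feasibility} against $\prior(\type)$ collapses the ratio $\belief(\type)/\prior(\type)$ and yields $\sum_{\type\in\Types}\prior(\type)\scr(\aaction\mid\type)=\sum_{\belief\in T}\bsplit(\belief)\strat(\aaction\mid\belief)$.

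For the \emph{only if} direction, I would start from a decision rule $\strat$ witnessing that $\bsplit$ implements $\marginal$ and define $f(\belief,\aaction)=\bsplit(\belief)\strat(\aaction\mid\belief)$. Non-negativity is clear, and obedience of the experiment gives $\strat(\aaction\mid\belief)>0$ only when $\aaction\in\aaction^*(\belief)$, so $f$ is supported on $E$ and respects the capacity $c(\aaction,\belief)=0$. The net outflow of belief node $\belief$ is $\bsplit(\belief)\sum_{\aaction\in\Actions}\strat(\aaction\mid\belief)=\bsplit(\belief)=|\demand(\belief)|$, meeting the supply bound, while the net inflow to action node $\aaction$ is $\sum_{\belief\in T}\bsplit(\belief)\strat(\aaction\mid\belief)$, which by the collapse computation equals $\sum_{\type\in\Types}\prior(\type)\scr(\aaction\mid\type)=\marginal(\aaction)=\demand(\aaction)$, meeting the demand. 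Hence $\demand$ is feasible.

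For the \emph{if} direction, I would take a feasible flow $f$ (which lives only on $E$ because $c(\aaction,\belief)=0$) and set $\strat(\aaction\mid\belief)=f(\belief,\aaction)/\bsplit(\belief)$ on $T$. The crux is to verify that $\strat(\cdot\mid\belief)$ is genuinely a probability vector: feasibility alone gives only $\sum_{\aaction\in\Actions}f(\belief,\aaction)\leq\bsplit(\belief)$ and $\sum_{\belief\in T}f(\belief,\aaction)\geq\marginal(\aaction)$. Because both $\bsplit$ and $\marginal$ are probability measures, the chain $1=\sum_{\aaction}\marginal(\aaction)\leq\sum_{\aaction}\sum_{\belief}f(\belief,\aaction)=\sum_{\belief}\sum_{\aaction}f(\belief,\aaction)\leq\sum_{\belief}\bsplit(\belief)=1$ forces every inequality to bind node by node; this global mass-balance observation is precisely what upgrades the Gale inequalities to the equalities I need. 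With the supply constraints binding, $\strat(\cdot\mid\belief)$ sums to one and is supported on $\aaction^*(\belief)$ since $f$ lives on $E$, so the induced experiment is obedient; feeding $\strat$ into \autoref{eq:tau-feasibility} and invoking both the collapse computation and Bayes plausibility of $\bsplit$ (which guarantees $\sum_{\aaction}\scr(\aaction\mid\type)=1$ and the correct state marginal) shows that the resulting $\scr$ satisfies \eqref{eq:marginal-actions-b} and \eqref{eq:obedience-c}, i.e.\ $\bsplit$ implements $\marginal$. I therefore expect the main obstacle to be this equality-forcing step rather than the construction itself, since the remaining verifications are routine given $\sum_{\type\in\Types}\belief(\type)=1$ and Bayes plausibility.
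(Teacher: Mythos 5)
Your proposal is correct and follows essentially the same route as the paper: the same dictionary $f(\belief,\aaction)=\bsplit(\belief)\strat(\aaction\mid\belief)$ in both directions, with your ``equality-forcing'' mass-balance chain being exactly the content of the paper's market-clearing lemma (\autoref{lemma:exact satiation}), which the paper states separately and proves by the same total-mass argument. No gaps.
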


The proof of \autoref{proposition:equivalence of representation} relies on the following lemma:
\begin{lemma}[Market clearing]\label{lemma:exact satiation}
If $\demand$ is feasible on $\graphp(\bsplit)$, then
the flow out of any supply node $\belief \in T$ is exactly $\bsplit \left( \belief
\right) $ (and not less), and the flow into any demand node $\aaction\in\Actions$ is
exactly $\marginal\left( \aaction\right) $ (and not more).
\end{lemma}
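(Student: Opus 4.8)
The plan is to exploit the fact that in $\graphp(\bsplit)$ the total demand and the total supply both equal one, which forces every feasibility inequality to bind. First I would record the bookkeeping identity that holds for \emph{any} flow $f$: writing the gross inflow and outflow at a node $v$ as $\mathrm{in}(v)=\sum_{v'}f(v',v)$ and $\mathrm{out}(v)=\sum_{v'}f(v,v')$, and summing the net flow $\mathrm{in}(v)-\mathrm{out}(v)$ over all nodes, each edge contributes its flow once with a plus sign (at its head) and once with a minus sign (at its tail), so everything telescopes to
\[\sum_{v\in\Actions\cup T}\bigl(\mathrm{in}(v)-\mathrm{out}(v)\bigr)=0.\]

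Next I would write \citeauthor{gale1957theorem}'s feasibility condition node by node. Feasibility of \demand\ requires $\mathrm{in}(v)-\mathrm{out}(v)\geq\demand(v)$ at every node: for a demand node this reads $\mathrm{in}(\aaction)-\mathrm{out}(\aaction)\geq\marginal(\aaction)$, and for a supply node $\mathrm{in}(\belief)-\mathrm{out}(\belief)\geq-\bsplit(\belief)$. Summing these over all nodes and combining with the telescoping identity yields
\[0=\sum_{v\in\Actions\cup T}\bigl(\mathrm{in}(v)-\mathrm{out}(v)\bigr)\geq\sum_{\aaction\in\Actions}\marginal(\aaction)-\sum_{\belief\in T}\bsplit(\belief)=1-1=0,\]
where the final equality uses that \marginal\ is a probability distribution over \Actions\ and \bsplit\ is a probability distribution over its (finite) support $T$. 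Since the sum of the non-negative slacks is zero, every feasibility inequality must hold with equality.

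It then remains to convert the binding constraints into statements about gross flows, and this is where the one-directional edge structure does the work. Because $c(\aaction,\belief)=0$ on every edge, no flow ever leaves a demand node, so $\mathrm{out}(\aaction)=0$ and the binding constraint at $\aaction$ becomes $\mathrm{in}(\aaction)=\marginal(\aaction)$: the flow into $\aaction$ is exactly $\marginal(\aaction)$, hence not more. Symmetrically, no flow ever enters a supply node, so $\mathrm{in}(\belief)=0$ and the binding constraint at $\belief$ becomes $\mathrm{out}(\belief)=\bsplit(\belief)$: the flow out of $\belief$ is exactly $\bsplit(\belief)$, hence not less. This establishes both claims.

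I do not expect a serious obstacle, since the argument is pure accounting; the subtlety to watch is the sign convention in the feasibility definition (a lower bound on net inflow, equivalently an upper bound on net outflow at supply nodes) and applying the global balance to the correct node set $\Actions\cup T$, which in this construction contains no auxiliary source or sink. The essential economic content is simply that, market clearing aside, the only way to meet a unit of demand from a unit of supply with no slack is to route all of both—so feasibility alone pins the flows down exactly.
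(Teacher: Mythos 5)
Your proof is correct and rests on the same accounting idea as the paper's: total demand and total supply both equal one, so no feasibility constraint can have slack. The paper packages this as a contradiction argument (a strict inequality at one demand node would force the total flow $\sum_{\belief\in T}\sum_{\aaction\in\Actions}f(\belief,\aaction)$ above $1$, violating the supply-side bound), while you sum the node-level slacks against the global conservation identity; the two are equivalent, and your explicit use of the one-directional edge capacities to pass from net to gross flows is a valid, if slightly more formal, rendering of what the paper leaves implicit.
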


\begin{proof}[Proof of \autoref{lemma:exact satiation}]
Suppose that $\demand$ is feasible. We  show that the
flow into any demand node $\aaction\in\Actions$ is exactly $\marginal\left( \aaction\right) $.
Towards a contradiction, suppose that $\sum_{\belief \in T}f\left( \belief
,\aaction\right) \geq \marginal\left( \aaction\right) $ for all $\aaction\in\Actions$, with  strict
inequality for some $a$. Summing over all actions on both sides of the
inequality yields \[\sum_{\aaction\in\Actions}\sum_{\belief \in T}f\left( \belief ,\aaction\right)
>\sum_{\aaction\in\Actions}\marginal\left( \aaction\right) =1.\] On the other hand, because $
\demand$ is feasible, then the flow out of each $
\belief \in T$ is at most $\bsplit \left( \belief \right) $, and therefore for all $\belief \in T$ \[\sum_{\aaction\in
A}f\left( \belief ,\aaction\right) \leq \bsplit \left( \belief \right) .\]
Summing again over all actions on both sides yields \[\sum_{\belief \in
T}\sum_{\aaction\in\Actions}f\left( \belief ,\aaction\right) \leq \sum_{\belief \in T}\bsplit \left( \belief
\right) =1,\]  a contradiction. The proof that the flow out of any
supply node $\belief $ is exactly $\bsplit \left( \belief \right) $ is analogous and
hence omitted.
\end{proof}

\begin{proof}[Proof of \autoref{proposition:equivalence of representation}]

Suppose first that the Bayes plausible distribution over posteriors \bsplit\ is such that \demand\ is feasible on $\graphp(\bsplit)$ and let $f$ denote the feasible flow. Consider a decision rule $\strat:\Posteriors\mapsto\Delta(\Actions)$ such that the agent takes action $\aaction\in\Actions$ when the belief is $\belief \in T$ with probability $\sigma\left(
a~\ |~\belief \right) =f\left( \belief ,\aaction\right) /\bsplit \left( \belief \right) $. This correctly defines a decision rule as
\[
\sum_{\aaction\in\Actions}\strat \left( a~\ |~\belief \right) =\frac{\sum_{\aaction\in\Actions}f\left( \belief
,\aaction\right) }{\bsplit \left( \belief \right) }=1
\]
where the second equality is implied by \autoref{lemma:exact satiation}. 
Furthermore, \strat\ is optimal for the agent because $\belief $ and $a$ are connected with
an edge only if $a$ is optimal under $\belief $, i.e. $\aaction\in\aaction^{\ast }\left( \belief
\right) $.

To verify that $(\bsplit,\strat)$ induce \marginal, note that for 
all $\aaction\in\Actions$
\[
\sum_{\belief \in T}\bsplit \left( \belief \right) \strat \left( a~\
|~\belief \right) =\sum_{\belief \in T}f\left( \belief ,\aaction\right) =\marginal\left(
\aaction\right) \text{.}
\]%
where the second equality follows again from \autoref{lemma:exact satiation}. Thus, $\marginal$ is consistent with $\bsplit$.

Conversely, suppose that $\pair$ are BCE-consistent. Then, by \autoref{eq:tau-feasibility}, a Bayes plausible distribution over posteriors \bsplit\ and a decision rule \strat\ exists that induce an obedient experiment.\footnote{Namely, BCE-consistency implies the existence of an obedient experiment from which we can infer the following distribution over posteriors. First, let
\begin{align*}
    \belief_a(\type)=\frac{\prior(\type)\joint(\aaction|\type)}{\sum_{\typeb\in\Types}\prior(\typeb)\joint(\aaction|\typeb)},
\end{align*}
and let $\bsplit(\{\belief_a\})=\sum_{\type\in\Types}\prior(\type)\joint(\aaction|\type)$. The decision rule $\strat(\cdot|\belief_a)=\mathbbm{1}[\aactionb=\aaction]$ completes the construction.} Define the graph $\graphp(\bsplit)$ and the demand \demand. Note that the demand \demand\ is feasible on $\graphp(\bsplit)$ by defining the flow $f\left( \belief ,\aaction\right) =\strat \left( a~\ |~\belief \right) \bsplit \left( \belief \right) 
$ for all $\left( \belief ,\aaction\right) \in T\times A$.
\end{proof}

\autoref{proposition:equivalence of representation} implies that verifying that \bsplit\ implements \marginal\ is equivalent to verifying the feasibility of the demand $\demand$ for the graph $\graphp$. The main theorem in \cite{gale1957theorem} provides necessary and sufficient conditions under which $\demand$ is feasible. Adapted to our setting, the
conditions in \cite{gale1957theorem} can be stated as follows:
\begin{proposition}[\citealp{gale1957theorem}]
\label{proposition: Gale 1957}The demand $\demand$ is feasible on graph $\graphp(\bsplit)$ if
and only if for every set $\Actionsb\subseteq\Actions$ a flow $f_{\Actionsb}$ exists such that:
\begin{enumerate}
\item\label{itm:gale-1} $\sum_{\aaction\in\Actions}f_{\Actionsb}\left( \belief ,\aaction\right) \leq \bsplit \left( \belief \right) $
\quad for all $\belief \in T$, and
\item\label{itm:gale-2} $\sum_{\aaction\in\Actionsb}\sum_{\belief \in T}f_{\Actionsb}\left( \belief ,\aaction\right) \geq
\sum_{\aaction\in\Actionsb}\marginal\left( \aaction\right) $.
\end{enumerate}
\end{proposition}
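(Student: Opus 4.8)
The plan is to prove the two directions separately, drawing the substantive (``if'') direction from \citeauthor{gale1957theorem}'s feasibility theorem, which I would either invoke verbatim or reprove by max-flow--min-cut on an augmented network.

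\emph{Necessity} needs no subset-specific construction. Suppose $\demand$ is feasible and let $f$ be a feasible flow on $\graphp(\bsplit)$. I claim the \emph{same} flow witnesses the condition for every $\Actionsb\subseteq\Actions$, i.e.\ one may take $f_\Actionsb=f$ for all $\Actionsb$. Condition~\ref{itm:gale-1} holds because feasibility bounds the flow out of each supply node $\belief$ by its supply $\bsplit(\belief)$ (in fact \autoref{lemma:exact satiation} gives equality), and condition~\ref{itm:gale-2} holds because feasibility forces the net flow into each demand node $\aaction$ to be at least $\marginal(\aaction)$; summing over $\aaction\in\Actionsb$ yields $\sum_{\aaction\in\Actionsb}\sum_{\belief\in T}f(\belief,\aaction)\ge\sum_{\aaction\in\Actionsb}\marginal(\aaction)$.

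\emph{Sufficiency} is the content of the theorem. Assuming that for every $\Actionsb$ a flow $f_\Actionsb$ satisfying \ref{itm:gale-1}--\ref{itm:gale-2} exists, I would augment $\graphp(\bsplit)$ with a super-source $s$ joined to each $\belief\in T$ by an edge of capacity $\bsplit(\belief)$ and a super-sink $t$ fed by each $\aaction\in\Actions$ through an edge of capacity $\marginal(\aaction)$, keeping every belief--action edge at capacity $+\infty$. Since $\sum_{\belief\in T}\bsplit(\belief)=\sum_{\aaction\in\Actions}\marginal(\aaction)=1$, feasibility of $\demand$ is equivalent to the existence of an $s$--$t$ flow of value $1$ (all source and sink edges saturated), hence by max-flow--min-cut to every $s$--$t$ cut having capacity at least $1$. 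A finite cut cannot sever an infinite belief--action edge, so it is pinned down by the set $\Actionsb$ of actions placed on the sink side; minimizing over the placement of the posteriors, its capacity is $\sum_{\aaction\notin\Actionsb}\marginal(\aaction)+\sum_{\belief:\,\aaction^{*}(\belief)\cap\Actionsb\neq\emptyset}\bsplit(\belief)$, which is $\ge 1$ exactly when $\sum_{\belief:\,\aaction^{*}(\belief)\cap\Actionsb\neq\emptyset}\bsplit(\belief)\ge\sum_{\aaction\in\Actionsb}\marginal(\aaction)$.

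The last inequality is precisely the statement that the \emph{maximum} mass one can route into the action set $\Actionsb$ subject to the supply caps \ref{itm:gale-1} is at least $\sum_{\aaction\in\Actionsb}\marginal(\aaction)$ — equivalently, that a flow $f_\Actionsb$ meeting \ref{itm:gale-1}--\ref{itm:gale-2} exists, since a posterior $\belief$ can push its entire mass $\bsplit(\belief)$ into $\Actionsb$ iff $\aaction^{*}(\belief)\cap\Actionsb\neq\emptyset$. Ranging $\Actionsb$ over all subsets therefore matches the family of cut inequalities one-for-one, so the hypothesis excludes every deficient cut and max-flow--min-cut returns a single globally feasible flow. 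I expect the crux to be exactly this \emph{gluing} step: the hypothesis only furnishes a possibly different certificate $f_\Actionsb$ for each subset, and the real work — the content of \citeauthor{gale1957theorem}'s theorem — is that these local certificates amalgamate into one feasible flow. The min-cut reformulation is what makes the amalgamation automatic, because any single violated cut would contradict the corresponding $f_\Actionsb$.
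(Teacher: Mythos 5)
Your proof is correct, but note that the paper does not actually prove \autoref{proposition: Gale 1957}: it imports the statement wholesale from \citet{gale1957theorem} (the proposition is presented as an adaptation of Gale's supply--demand theorem to the bipartite graph $\graphp(\bsplit)$), and only afterwards, in \autoref{corollary:gale}, converts the subset conditions into the cut-type inequality $\sum_{\belief\in\opt(\Actionsb)}\bsplit(\belief)\geq\sum_{\aaction\in\Actionsb}\marginal(\aaction)$. What you supply is therefore a genuine, self-contained proof where the paper has a citation. Your necessity direction (a single feasible flow serves as $f_\Actionsb$ for every $\Actionsb$) is immediate and matches what feasibility means in the paper's setup. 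Your sufficiency direction is the standard max-flow--min-cut proof of Gale's theorem: the augmentation by a super-source and super-sink is sound, the observation that finite cuts are indexed by the set of actions on the sink side (forcing every $\belief$ with $\aaction^*(\belief)\cap\Actionsb\neq\emptyset$ to the sink side as well) is the right accounting, and your equivalence between the cut inequality for $\Actionsb$ and the existence of a flow $f_\Actionsb$ satisfying conditions \ref{itm:gale-1}--\ref{itm:gale-2} is easily checked in both directions (the maximal mass routable into $\Actionsb$ under the supply caps is exactly $\sum_{\belief:\aaction^*(\belief)\cap\Actionsb\neq\emptyset}\bsplit(\belief)$, since each such $\belief$ can dump its entire mass on any one action in $\aaction^*(\belief)\cap\Actionsb$). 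A pleasant by-product of your route is that it proves \autoref{proposition: Gale 1957} and \autoref{corollary:gale} in one pass, since the min-cut condition you derive is precisely the inequality of the corollary; what the paper's citation-based route buys instead is brevity and a statement format (subset-specific flows) that mirrors Gale's original theorem. You correctly identify the crux as the amalgamation of the per-subset certificates into one global flow, which is exactly where max-flow--min-cut (or, equivalently, LP duality) does the work.
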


In our setting, given a set $\Actionsb\subseteq\Actions$ items \ref{itm:gale-1} and \ref{itm:gale-2} in \autoref{proposition: Gale 1957} are satisfied for some flow $f_{\Actionsb}$ if and only if they are satisfied when the out flow from every supply node that is connected to nodes in $\Actionsb$ is maximal. Denote the set of all posterior beliefs in $T$ for which some action in $\Actionsb$ is optimal (and perhaps also actions that are not in $\Actionsb$) by $\opt \left( \Actionsb\right) =\left\{ \belief \in T \mid \exists \aaction\in\Actionsb,\aaction\in\aaction^{\ast }\left( \belief \right) \right\} $.  
Thus, in the graph we constructed, all and only beliefs (i.e., supply nodes) in $\opt  \left( \Actionsb\right) $ are connected to actions (i.e., demand nodes) in $\Actionsb$. The next corollary follows immediately:
\begin{corollary}\label{corollary:gale}
The Bayes plausible distribution over posteriors \bsplit\ implements \marginal\ if and only if for every subset $\Actionsb\subseteq\Actions$,
\begin{align}\label{eq:demand}
\sum_{\belief \in \opt  \left( \Actionsb\right) }\bsplit \left( \belief \right) \geq
\sum_{\aaction\in\Actionsb}\marginal\left( \aaction\right) \text{.}  
\end{align}
\end{corollary}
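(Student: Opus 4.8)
The plan is to derive \autoref{corollary:gale} as a direct specialization of Gale's feasibility criterion (\autoref{proposition: Gale 1957}) to the bipartite graph $\graphp(\bsplit)$, combined with \autoref{proposition:equivalence of representation}, which reduces implementability of \marginal\ to feasibility of the demand \demand. Together these two results tell us that \bsplit\ implements \marginal\ if and only if, for every $\Actionsb\subseteq\Actions$, a flow $f_\Actionsb$ exists satisfying the supply constraint $\sum_{\aaction\in\Actions}f_\Actionsb(\belief,\aaction)\leq\bsplit(\belief)$ for all $\belief\in T$ together with the demand constraint $\sum_{\aaction\in\Actionsb}\sum_{\belief\in T}f_\Actionsb(\belief,\aaction)\geq\sum_{\aaction\in\Actionsb}\marginal(\aaction)$. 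So the only thing left to do is, for each fixed $\Actionsb$, to determine the largest total flow that can be routed into the demand nodes of $\Actionsb$ subject to the supply constraints, and then compare it against $\sum_{\aaction\in\Actionsb}\marginal(\aaction)$.

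First I would compute this maximum inflow to $\Actionsb$. The key structural observation is that in $\graphp(\bsplit)$ an edge $(\belief,\aaction)$ exists precisely when $\aaction\in\aaction^*(\belief)$, and that these edges carry infinite capacity. Consequently, a supply node $\belief$ has an edge into $\Actionsb$ if and only if some action in $\Actionsb$ is optimal at $\belief$, i.e.\ if and only if $\belief\in\opt(\Actionsb)$. Because the capacities from supply to demand are unbounded, any such node can route its \emph{entire} supply $\bsplit(\belief)$ to one of its neighbors in $\Actionsb$, while every node outside $\opt(\Actionsb)$ has no edge into $\Actionsb$ and can send nothing. I would therefore conclude that the maximum achievable inflow to $\Actionsb$ equals exactly $\sum_{\belief\in\opt(\Actionsb)}\bsplit(\belief)$.

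Having identified this maximum, I would finish by noting that, for the fixed set $\Actionsb$, the pair of constraints in \autoref{proposition: Gale 1957} admits some flow $f_\Actionsb$ if and only if the maximal inflow weakly exceeds the demand, that is, if and only if $\sum_{\belief\in\opt(\Actionsb)}\bsplit(\belief)\geq\sum_{\aaction\in\Actionsb}\marginal(\aaction)$, which is precisely \eqref{eq:demand}. Since this equivalence holds set by set, quantifying over all $\Actionsb\subseteq\Actions$ yields the stated characterization.

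I expect the only delicate point to be the clean justification that the maximal inflow is attained by fully saturating each supply node in $\opt(\Actionsb)$: one must check that this saturation is feasible \emph{simultaneously} across all such nodes. This is where it could in principle require a genuine max-flow/min-cut argument, but the infinite edge capacities together with the fact that the supply constraints act separately on each $\belief$ make the node-by-node saturating flow manifestly feasible, so the observation reduces to a short direct verification rather than an invocation of the min-cut machinery.
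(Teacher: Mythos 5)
Your proposal is correct and follows essentially the same route as the paper: combine \autoref{proposition:equivalence of representation} with Gale's criterion (\autoref{proposition: Gale 1957}), then observe that the infinite edge capacities and node-separable supply constraints let every supply node in $\opt(\Actionsb)$ saturate into $\Actionsb$ simultaneously, so the maximal inflow to $\Actionsb$ is exactly $\sum_{\belief\in\opt(\Actionsb)}\bsplit(\belief)$. The paper makes the same reduction (phrased as checking the two Gale conditions at the flow that maximizes outflow from the nodes connected to $\Actionsb$) and likewise treats the corollary as immediate from there.
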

To see that the condition in \autoref{corollary:gale} is equivalent to that in \autoref{theorem: single agent BCE}, note first
that because $\bsplit,\marginal$ are measures (and hence add up to $1$), \autoref{eq:demand} can be
equivalently written as follows:%
\begin{align}\label{eq:theorem-1}
\sum_{\aaction\in \overline{\Actionsb}}\marginal\left( \aaction\right) \geq \sum_{\belief \in\overline{
\opt  \left( \Actionsb\right)} }\bsplit \left( \belief \right) \text{,}
\end{align}
where the upper-bar notation denotes the complement of a set--for instance, $\overline{\Actionsb}=\Actions\setminus\Actionsb$. 

Note that
\begin{align*}
    \overline{\opt(\Actionsb)}=\{\belief\in T|\aaction^*(\belief)\cap \Actionsb=\emptyset\}=    \bigcup_{C\subseteq\overline{\Actionsb}}\{\belief\in T|a^*(\belief)=C\}.
\end{align*}
Hence, we can write \autoref{eq:theorem-1} as follows
\begin{align}
\sum_{\aaction\in\overline{\Actionsb}}\marginal(\aaction)\geq\sum_{C\subseteq\overline{\Actionsb}}\sum_{\belief\in T:\aaction^*(\belief)=C}\bsplit(\belief)
\end{align}
which is the equation in \autoref{theorem: single agent BCE}.

\end{document}